\newtheorem{theorem}{Theorem}[section]
\newtheorem{lemma}[theorem]{Lemma}
\newtheorem{corollary}[theorem]{Corollary}
\newtheorem{claim}[theorem]{Claim}
\newcommand{\codespace}{\vspace{3mm}}
\newcommand{\ack}{F_{ack}}
\newcommand{\cack}{\cdot F_{ack}}
\newcommand{\id}[1]{\text{id}_{#1}}
\def\paragraph#1{\vspace{0.05em}\noindent {\bf #1}}
\def\paragraph#1{\vspace{0.25em}\noindent {\bf #1}}
\begin{document}

  \title{Consensus with an Abstract MAC Layer}
  \author{Calvin Newport\\ Georgetown University\\ {\tt cnewport@cs.georgetown.edu}}
 \date{}
  \maketitle
  
 \begin{abstract}
In this paper, we study distributed consensus in the radio network setting.
We produce new upper and lower bounds for this
problem in an abstract MAC layer model that captures the key guarantees provided by most wireless MAC layers.
In more detail, we first generalize the well-known impossibility of deterministic consensus with a single
crash failure~\cite{flp} from the asynchronous message passing model to our wireless setting.
Proceeding under the assumption of no faults, we then investigate
the amount of network knowledge required to solve consensus in our model---an important
question given that these networks are often deployed in an ad hoc manner.
We  prove consensus is impossible
without unique ids or without knowledge of network size (in multihop topologies).
We also prove a lower bound on optimal time complexity.
We then match these lower bounds with a pair of new deterministic consensus algorithms---one
 for single hop topologies and one for multihop topologies---providing a comprehensive
 characterization of the consensus problem in the wireless setting.
From a theoretical perspective, our results shed new insight into the role of network information and the power of MAC layer abstractions
in solving distributed consensus. From a practical perspective, given the level of abstraction used by our model,
our upper bounds can be easily implemented in real wireless devices on existing MAC layers
while preserving their correctness guarantees---facilitating the development of wireless distributed systems.  
 \end{abstract}
  
 
  




\section{Introduction}
\label{sec:intro}

Consensus provides a fundamental building block for developing reliable distributed
systems~\cite{guerraoui:1997,guerraoui:2000,guerraoui:2001}.
Motivated by the increasing interest in {\em wireless} distributed systems,
in this paper we prove new upper and lower bounds for the consensus problem in wireless networks.

\paragraph{The Abstract MAC Layer.}
Consensus bounds are dependent on the model in which they are established.
Accordingly, we must take care in selecting our model for studying the wireless version of this problem.
Most existing work on distributed algorithms for wireless networks assumes low-level synchronous models that require algorithms to deal directly
with link-layer issues such as signal fading and channel contention. Some of these models use topology graphs
to determine message behavior 
(c.f.,~\cite{baryehuda:1987,jurdzinski:2002,kowalski:2005,moscibroda:2005,czumaj:2006,gasieniec:2007})
while others use signal strength calculations 
(c.f.,~\cite{moscibroda:2006,moscibroda:2007,goussevskaia:2009,halldorsson:2012b,jurdzinski:2013:random,daum:2013}). 
These models are well-suited for asking basic science questions about the capabilities of wireless communication.
They are not necessarily appropriate, however, for developing algorithms meant for deployment,
as
  real wireless systems typically require an algorithm to operate on top of a general-purpose MAC layer which is hard to bypass and enables
many key network functions such as managing co-existence.

Motivated by this reality, in this paper we adopt the {\em abstract MAC layer} approach~\cite{kuhn:2011abstract},
in which we model the basic guarantees provided by most existing wireless MAC 
layers---if you broadcast a message
it will eventually be delivered with acknowledgment to nearby nodes in the network---but leverages a non-deterministic message
scheduler to allow for unpredictability---there is no bound on when messages are delivered or in what order. 
The goal with this approach is to describe and analyze algorithms at a level of abstraction
that makes it easy to subsequently implement theory results in real systems while still preserving their formally analyzed properties.
(See Section~\ref{sec:model} for a detailed model definition and motivation.)
%
%

\paragraph{Results.}
We begin with lower bounds.
In Section~\ref{sec:lower:crash}, we generalize the oft-cited result on the impossibility of deterministic consensus with a single
process failure~\cite{flp} from the asynchronous message passing model to our abstract MAC layer model.
(See Section~\ref{sec:model} for details on how these two models differ.)
The main difficulty in this generalization is the new assumption in our model that senders receive
acknowledgments at some point after a given broadcast completes. 
To overcome this difficulty, we are forced to restrict our valency definitions to focus on a restricted
class of schedulers.

Having established this impossibility, we proceed in this paper assuming no crash failures.
Noting that wireless network deployments are often {ad hoc},
we next focus on determining how much {\em a priori} information about the network
is required to solve deterministic consensus in our model.
We start, in Section~\ref{sec:lower:unique}, by proving that consensus is impossible without unique ids,
even if nodes know the size and diameter of the network. 
We then prove, in Section~\ref{sec:lower:n},
that even with unique ids (and knowledge of the diameter), consensus is impossible in multihop networks
if nodes do not know the network
size. Finally, we prove that any solution to consensus in our model requires
$\Omega(D\cack)$ time, where $D$ is the diameter of the underlying network topology
and $\ack$ is the maximum message delivery delay (a value unknown to the nodes in the network). 
All three bounds leverage partitioning arguments
that rely on carefully-constructed
worst-case network topologies and message scheduler behavior for
 which the relevant network knowledge assumptions do not break symmetry.

We then turn our attention to matching these lower bounds with a pair of new deterministic consensus algorithms.
We begin, in Section~\ref{sec:upper:single},
 with a new algorithm that guarantees to solve consensus in 
single hop networks in an optimal $O(\ack)$ time, even without advance
knowledge of the network size or participants (this opens up a gap with the asynchronous message
passing model, where consensus is impossible under such assumptions~\cite{abboud:2008}).
This algorithm uses a two-phase structure. The key insight is that nodes wait to decide after
their second phase broadcast until they have also heard this broadcast from a set of important {\em witnesses}.

We then present, in Section~\ref{sec:upper:multihop}, the {\em wireless PAXOS} ({wPAXOS}) algorithm, which 
guarantees to solve consensus in multihop topologies of diameter $D$ in an optimal
$O(D\cack)$ time. This algorithm assumes unique ids and knowledge of $n$ (as required by our lower bounds\footnote{Our algorithm
still works even if provided only {\em good enough} knowledge of $n$ to recognize a majority. This does not contradict
our lower bound as the lower bound assumes {\em no} knowledge of $n$.}),
but no other advance knowledge of the network or participants. 
The wPAXOS algorithm
combines the high-level logic of the PAXOS consensus algorithm~\cite{paxos}
with a collection of support services that efficiently disseminate proposals and aggregate responses.
We note that if the PAXOS (or similar consensus algorithm) logic is combined with a basic flooding algorithm,
the result would be a $O(n\cack)$ time complexity, as bottlenecks are possible where
$\Omega(n)$ value and id pairs must be sent by a single node only able to fit $O(1)$ such pairs in each message.
To reduce this time complexity to an optimal $O(D\cack)$, 
we implement eventually stable shortest-path routing trees and show they allow fast aggregation once stabilized,
and preserve safety at all times. These stabilizing support services and their analysis represent the main contribution of this algorithm.
One could, for example, replace the PAXOS logic working with these services with something simpler (since
we have unique ids and knowledge of $n$, and no crash failures,
we could, for example, simply gather all values at all nodes). We choose PAXOS mainly for performance reasons, 
as it only depends on a majority nodes to make progress, and is therefore not slowed if a small portion of the 
network is delayed.

\paragraph{Related Work.}
Consensus provides a fundamental building block for reliable distributed computing~\cite{guerraoui:1997,guerraoui:2000,guerraoui:2001}.
It is particularly well-studied in asynchronous models~\cite{paxos,schiper:1997,mostefaoui:1999,aguilera:2000},
where deterministic solutions are impossible with even a single crash failure~\cite{flp}.
Most existing distributed algorithm results for the wireless setting assume low-level models.
Though consensus has been studied in such models (e.g.,~\cite{chockler:2005}), most
efforts in the low-level setting focus on reliable communication problems such as broadcast
(see~\cite{peleg:2007} for a good survey).
The abstract MAC layer approach to modeling wireless networks is introduced in~\cite{kuhn:2009}
(later expanded to a journal version~\cite{kuhn:2011abstract}), and
has been subsequently used to study several
 different problems~\cite{cornejo2009neighbor,khabbazian:2010,khabbazian:2011,cornejo2014reliable}. 
This paper, however,  is the first to consider consensus in the 
abstract MAC layer context.

Other researchers have also studied consensus in wireless networks at higher levels of abstraction.
%
Vollset and Ezhilchelvan~\cite{vollset:2005}, 
and Alekeish and Ezhilchelvan~\cite{alekeish:2012}, study consensus
in a variant of the asynchronous message passing model where pairwise channels come and go dynamically---capturing 
some behavior of {mobile} wireless networks. Their correctness results depend on detailed liveness guarantees
that bound the allowable channel changes.
Wu et~al.~\cite{wu:2009} use the standard asynchronous message passing model (with 
unreliable failure detectors~\cite{chandra:1996}) as a stand-in for a wireless network,
focusing on how to reduce message complexity (an important metric in a resource-bounded wireless setting)
in solving consensus.

Finally, we note that a key focus in this paper is understanding the importance of network information
in solving consensus, a topic previously studied in the classical models.
Ruppert~\cite{ruppert2007anonymous},
and Bonnet and Raynal~\cite{bonnet2010anonymous},
for example, study the amount of extra power needed (in terms of shared objects and failure detection, respectively)
to solve wait-free consensus in {\em anonymous} versions of the standard models.
Attiya et~al.~\cite{attiya2002computing} describe consensus solutions for shared memory systems without failures or unique ids.
In this paper, by contrast, we prove consensus impossible without failures or unique ids. These results do not contradict,
however, as we assume multihop message passing-style networks. 
A series of papers~\cite{cavin:2004,greve:2007,alchieri:2008}, starting with the work of Cavin et~al.~\cite{cavin:2004},
study the related problem of  {\em consensus with unknown participants} (CUPs),
where nodes are only allowed to communicate with other nodes whose identities have been provided
by a {\em participant detector} formalism. Results on the CUPs problem focus on the structure of the knowledge
from such detectors required for consensus (e.g., if we create a graph with a directed edge indicating
participant knowledge, then the resulting graph must satisfy certain connectivity properties).
Closer to our own model is the work of Abboud~et~al.~\cite{abboud:2008},
which studies single hop networks in which participants are {\em a priori} unknown,
but nodes do have a reliable broadcast primitive. They prove consensus is impossible
in single hop networks under these assumptions without knowledge of network size.
In Section~\ref{sec:upper:single}, we describe an algorithm in our model that {\em does} solve consensus
under these assumptions: opening a gap between these two models.

\section{Model and Problem}
\label{sec:model}

For simplicity, in the following we sometimes call our model {\em the} {abstract MAC layer} model.
We emphasize, however, that there is no single abstract MAC layer model, but instead
many variants that share the same basic assumptions of acknowledged local broadcast
and an arbitrary scheduler.
%
%
%
The major differences between our model and the standard asynchronous message passing model 
are that: (1) we assume local broadcast instead of point-to-point communication;
(2) senders receive an acknowledgment at some point after their broadcast completes
(this acknowledgment captures the time at which the underlying link layer is done broadcasting its current message;
e.g., after its slot in a TDMA schedule arrives or its CSMA algorithm finally detected a clear channel);
and (3) we care about assumptions regarding network information knowledge as wireless networks
are often deployed in an ad hoc manner where such information may be unknown to nodes.

\paragraph{Model Details.}
To formalize our abstract MAC layer model,
fix a graph $G=(V,E)$, with the set $V$ describing the $|V|=n$ wireless devices in the network
(called {\em nodes} in the following), and the edges in $E$ describing nodes within reliable communication range.
In this model, nodes communicate with a local reliable (but not necessarily atomic\footnote{Local broadcast in wireless networks
is not necessarily an atomic operation, as effects such as the hidden terminal problem
might lead some neighbors of a sender to receive a message before other neighbors.})
broadcast primitive that guarantees to eventually deliver messages to a node's neighbors
in $G$.
At some point after a broadcast completes (see below), a node receives an ack. 
If a node attempts to broadcast additional messages before receiving an ack for the current message,
those extra messages are discarded.
To formalize the message delivery guarantees,
fix some execution $\alpha$ of a deterministic algorithm in our model.
To simplify definitions, assume w.l.o.g. that messages are unique.
Let $\pi$ be the event in $\alpha$ where $u$ calls {\em broadcast}$(m)$, and $\pi'$ be the subsequent ack returned to $u$.
Our abstract MAC layer model guarantees that in the interval from $\pi$ to $\pi'$ in $\alpha$, every non-faulty neighbor of $u$ in $G$ receives $m$,
and these are the only receive events for $m$ in $\alpha$ (this is where we leverage message uniqueness in our definition).

We associate each message scheduler with an unknown (to the nodes) but finite value $\ack$ that bounds the maximum
delay it is allowed between a broadcast and a corresponding acknowledgment.
This property induces some notion of fairness: the scheduler must eventually allow each broadcast to finish. 
To simplify timing, we assume local non-communication steps take no time. That is, all non-determinism is captured
 in the message receive and ack scheduling.
 We note that in some definitions of abstract MAC layer models (see~\cite{kuhn:2011abstract}),
 a second timing parameter, $F_{prog}$, is introduced to bound the time for a node to receive {\em some}
 message when one or more neighbors are broadcasting.
 We omit this parameter in this study as it is used mainly for refining time complexity analysis,
 while we are concerned here more with safety properties. Refining our upper bound results in a model
 that includes this second parameter remains useful future work.
 We also note that some definitions of the abstract MAC layer assume a second topology graph consisting of {\em unreliable} links
 that sometimes deliver messages and sometimes do not.
We omit this second graph in this analysis, which strengthens our lower bounds.
Optimizing our multihop upper bound to work in the presence of such links, however, is left an open question.

In some results that follow, we consider {\em crash} failures (a node halts for the remainder of the execution).
The decision to crash a node and the timing of the crash is determined by the scheduler and can happen
in the middle of a broadcast (i.e., after some neighbors have received the message but not all).
We call a node {\em non-faulty} (equiv. {\em correct}) with respect to a given execution if it does not crash.
For our upper bounds, we restrict the message size to contain at most a constant number of unique ids.
For a given topology graph $G$, we use $D$ to describe its diameter.
Finally,  for integer $i>0$, let $[i]=\{1,2,...,i\}$.



\paragraph{The Consensus Problem.}
To better understand the power of our abstract MAC layer model we explore upper and lower bounds for the standard binary consensus problem.
In more detail, 
each node begins an execution with an initial value from $\{0,1\}$. 
Every node has the ability to perform a single irrevocable {\em decide} action for a value in $\{0,1\}$.
To solve consensus, an algorithm must guarantee the following three properties:
(1) {\em agreement}: no
two nodes decide different values; (2) {\em validity}: if a node decides value $v$,
 then some node had $v$ as its initial value;
and (3) {\em termination}: every non-faulty process eventually decides.
By focusing on binary consensus,
as oppose to the more general definition that assumes an arbitrary value set,
we strengthen the lower bounds that form the core of this paper.
Generalizing our upper bounds to the general case in an efficient manner
(e.g., a solution more efficient than agreeing on the bits of a general value,
one by one, using binary consensus) is non-trivial and remains an open problem.


\section{Lower Bounds}
\label{sec:lower}

We begin by exploring the fundamental limits of our abstract MAC layer model with respect to the consensus problem.
In Section~\ref{sec:upper}, we provide matching upper bounds. In the following, we defer some proofs to
the appendix for the sake of clarity and concision. 

\subsection{Consensus is Impossible with Crash Failures}
\label{sec:lower:crash}

In this section we prove consensus is impossible in our model
in the presence of even a single crash failure. To achieve the strongest possible bound we assume a clique topology.
Our proof generalizes the FLP~\cite{flp} result to hold in our stronger setting where nodes now have acknowledgments.\footnote{The version
of the FLP proof that we directly generalize here is the cleaned up and condensed version that appeared in
 the subsequent  textbook of Lynch~\cite{lynch:1996}.}

\paragraph{Preliminaries.}
For this proof, assume w.l.o.g. that 
nodes always send messages; i.e., on receiving an {ack} for their current message
they immediately begin sending a new message.
We define a {\em step} of a node $u$ to be either:
(a) a node $v\neq u$ receiving $u$'s current message; or
(b) $u$ receiving an {ack} for its current message (at which point its algorithm advances
to sending a new message).
We call a step of type (a) from above {\em valid} with respect to the execution so far
 if the node $v$ receiving $u$'s message
has not previously received that message {\em and} all non-crashed nodes smaller than $v$ (by some fixed but arbitrary
ordering) have already received $u$'s message.
We call a step of type (b) {\em valid} with respect to the execution so far
if every non-crashed neighbor of $u$ has received its current message in a previous step.
When we consider executions that consist only of valid steps we are, in effect, restricting our attention to a particular
type of well-behaved message scheduler. 

We call an execution fragment (equiv. prefix) $\alpha$ of a consensus algorithm
{\em bivalent} if there is some extension of valid
steps that leads to nodes deciding $0$,
and some extension of valid steps that leads to nodes deciding $1$.
By contrast, we call an execution $\alpha$ {\em univalent}
if every extension of valid steps from $\alpha$ that leads to a decision leads to the same decision.\footnote{Notice, 
not every extension need lead to a decision. If, for example, the extension does not
give steps to two or more nodes, than this is equivalent to two or more nodes crashing---a circumstance
for which we do not expect a $1$-fault tolerant algorithm to necessarily terminate.}
If this decision is $0$ (resp. $1$), we also say that $\alpha$ is  {\em $0$-valent} (resp. {\em $1$-valent}).
In the following, we use the notation $\alpha \cdot s$, for execution fragment $\alpha$ and step $s$,
to describe the extension of $\alpha$ by $s$.

\paragraph{Result.}
Fix some algorithm ${\cal A}$. Assume for the sake of contradiction
that ${\cal A}$ guarantees to solve consensus in this setting with up to $1$ crash failure.
%
%
The key to generalizing the FLP impossibility to our model is the following lemma,
which reproves the main argument of this classical result in a new way that leverages
our model-specific constraints.

\begin{lemma}
Fix some bivalent execution fragment $\alpha$ of ${\cal A}$ and some process $u$. There exists a finite extension
$\alpha'$ of $\alpha$ such that $\alpha'\cdot s_u$ is bivalent, where $s_u$ is a valid step of $u$ with respect to $\alpha'$.
\label{lem:2}
\end{lemma}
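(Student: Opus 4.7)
My plan is to adapt the classical FLP bivalence argument to the abstract MAC layer, using the valid-step scheduler restriction as the tool that makes the adaptation possible. I would begin by assuming for contradiction that the lemma fails: for every finite valid extension $\alpha'$ of $\alpha$, the configuration $\alpha'\cdot s_u$ is univalent, where $s_u$ denotes the unique valid next step of $u$ at $\alpha'$. Uniqueness follows because the valid-step definition forces $u$'s current broadcast to be delivered in a fixed (smallest-first) order and forbids the ack to $u$ until every non-crashed neighbor has received, so type-(a) and type-(b) $u$-steps cannot both be valid at any given point.

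Because $\alpha$ is bivalent, valid extensions $\alpha\cdot e_0$ and $\alpha\cdot e_1$ exist leading to $0$- and $1$-decisions. Labeling each intermediate extension $\alpha'$ by the valency of $\alpha'\cdot s_u(\alpha')$, the label must flip somewhere, yielding adjacent extensions $\alpha'$ and $\alpha'\cdot s$ such that (say) $\alpha'\cdot s_u$ is $0$-valent while $(\alpha'\cdot s)\cdot s_u$ is $1$-valent. A case analysis on $s$ then suffices. If $s$ is itself a $u$-step then $s = s_u(\alpha')$, so the allegedly $1$-valent configuration $\alpha'\cdot s\cdot s_u(\alpha'\cdot s)$ is in fact an extension of the $0$-valent $\alpha'\cdot s_u$; since $1$-crash-tolerant termination guarantees some extension reaches a decision, this is already a contradiction. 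Otherwise $s$ is a step of some $v\neq u$, and I would close the argument with a commutation diamond showing that $\alpha'\cdot s\cdot s_u$ and $\alpha'\cdot s_u\cdot s$ reach equivalent configurations --- again a contradiction, since extensions of a $0$-valent configuration cannot themselves be $1$-valent.

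The main obstacle is this commutation diamond, which is substantially subtler than in FLP because a MAC-layer step can mutate the state of multiple processes (a delivery updates a receiver; an ack updates the sender and triggers the sender's next broadcast) and because additional broadcast attempts made while an earlier broadcast is still active are discarded by the model. In particular, when $s$ delivers $v$'s message to $u$ and $s_u$ is the ack of $u$'s current broadcast, the two orderings can genuinely produce different next broadcasts at $u$, so raw commutativity of steps fails. This is precisely where the valid-step restriction earns its keep: since each broadcast must be delivered in full before being acked, I can extend both sides of the diamond by completing all in-flight broadcasts in the canonical valid order and only then compare states. Once both interleavings have delivered the same set of messages to the same recipients under the same ordering discipline, the local states at every process coincide, their valencies must agree, and the desired contradiction follows.
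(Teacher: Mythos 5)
Your overall skeleton matches the paper's: assume for contradiction that every valid extension followed by $u$'s next valid step is univalent, walk from $\alpha$ toward a $1$-valent extension to locate adjacent fragments where the valency of the ``append $s_u$'' configuration flips, and then do a case analysis on the flip step $s$. Your observation that $u$'s next valid step is uniquely determined is correct and is used the same way in the paper, and your disposal of the sub-case $s = s_u$ is fine.

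The gap is in the case you yourself flag as the main obstacle, and your proposed repair does not close it. When $s$ and $s_u$ affect the same node $w$ --- for instance $s$ delivers $v$'s message to $u$ while $s_u$ is $u$'s ack --- completing all in-flight broadcasts in canonical order afterward does not make the two interleavings converge. The node $w$ has genuinely experienced the two events in different orders; in the ack example, $u$ chooses its next broadcast either with or without knowledge of $v$'s message, so the two branches can emit different messages from that point on, and that divergence propagates to every other node. There is no scheduling discipline that restores equality of local states, so ``their valencies must agree'' does not follow. The paper closes this case not by commutation but by exploiting the $1$-crash tolerance: only $w$ can distinguish $\alpha'\cdot s_u$ from $\alpha'\cdot s\cdot s_u$, so extend both fragments by valid steps of every node \emph{except} $w$; this is equivalent to an execution in which $w$ alone crashes, so termination forces the remaining nodes to decide, and indistinguishability forces them to decide the same value in both extensions --- contradicting that one fragment is $0$-valent and the other $1$-valent. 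Your argument never invokes fault tolerance in this case, and without it the proof does not go through. (Your commutation diamond is sound for the case where the two steps affect disjoint nodes, which is the paper's Case 2.)
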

\begin{proof}
Assume for contradiction that this property does not hold for some $\alpha$ and $u$.
It follows that for every finite extension $\alpha'$ of $\alpha$ of valid steps, if we extend $\alpha'$ by a single additional valid
step of $u$, the execution is univalent.
Let $s_u$ be the next valid step for $u$ after $\alpha$ (by our definition of valid, $s_u$ is well-defined).
We start by considering $\alpha\cdot s_u$. By assumption, this fragment is univalent.
Assume, w.l.o.g. that $\alpha\cdot s_u$ is $0$-valent (the argument below is symmetric
for the case where $\alpha\cdot s_u$ is instead $1$-valent). 
Because $\alpha$ is bivalent, however,
there is some other extension $\alpha''$ consisting of valid steps that is $1$-valent.

We now move step by step in $\alpha''$, starting from $\alpha$, 
until our growing fragment becomes $1$-valent.
Assume there are $k\geq 1$ steps in this extension.
Label the $k$ intermediate execution fragments from $\alpha$ to the $\alpha''$: $\alpha_1, \alpha_2,...,\alpha_k$,
where $\alpha_k$ is the where the fragment becomes $1$-valent.
To simplify notation, let $\alpha_0 = \alpha$.
For $0< i< k$, we know $\alpha_i$ is bivalent, so, by our contradiction assumption, that step between $\alpha_{i-1}$ and $\alpha_i$
cannot be $s_u$.
Let $s^*$ be the step between $\alpha_{k-1}$ and $\alpha_k$.
(Notice that it {\em is} possible that $s^* = s_u$, as the execution is no longer bivalent after this final step.)

By our contradiction assumption, we know that for each $i \in \{0,...,k-1\}$, $\alpha_i \cdot s_u$ is univalent.
We also know that $\alpha_0\cdot s_u$ is $0$-valent.
It follows that there must exist some $\hat i \in \{0,...,k-1\}$, such
that $\alpha_{\hat i}\cdot s_u$ is $0$-valent and $\alpha_{\hat i} \cdot s_v \cdot s_u$ is $1$-valent,
where $s_v$ is the next valid step of some node $v\neq u$.
Notice this holds whether or not $s^* = s_u$ (if $s^* = s_u$ then $\alpha_k$ is a fragment ending with $s_u$
that we know to be $1$-valent, otherwise, $\alpha_k \cdot s_u$ is this fragment).
We have found a fragment $\alpha^*$, therefore,
where $\alpha^* \cdot s_u$ is $0$-valent but $\alpha^* \cdot s_v \cdot s_u$ is $1$-valent.
We now perform a case analysis on $s_v$ and $s_u$ to show that all possible cases lead to a contradiction.
In the following, to simplify notation,
 let $\beta_0 = \alpha^* \cdot s_u$ and $\beta_1 = \alpha^* \cdot s_v \cdot s_u$.

{\em Case $1$:} Both steps affect the same node $w$.
It is possible that $w$ is $u$ or $v$ (e.g., if $s_u$ is an acknowledgment and $s_v$ is $u$ receiving $v$'s message),
it is also possible that $w$ is not $u$ or $v$ (e.g., if $s_u$ and $s_v$ are both receives at some third node $w$).
We note that $w$ (and only $w$) can distinguish between $\beta_0$ and $\beta_1$. 
Imagine, however, that we extend $\beta_1$ such that every node {\em except for $w$} keeps taking valid steps.
 All non-$w$ nodes must eventually decide, as this is equivalent to a fair execution where $w$ crashes after $\beta_1$,
 and $w$ is the only node to crash---a setting where termination demands decision.
  By our valency assumption, these nodes must decide $1$.
 Now imagine that we extend $\beta_0$ with the exact same steps.
 For all non-$w$ nodes these two executions are indistinguishable, so they will once again decide $1$.
 We assumed, however, that $\beta_0$ was $0$-valent: a contradiction.

{\em Case $2$:} The steps affect two different nodes. 
In this case, it is clear that no node can distinguish between $\beta_0\cdot s_v$ and $\beta_1$.
We can, therefore, apply the same style of indistinguishability argument as in case $1$,
except in this case we can allow all nodes to continue to take steps.
\end{proof}

\noindent We now leverage Lemma~\ref{lem:2} to prove our main theorem. 

\begin{theorem}
There does not exist a deterministic algorithm that guarantees to solve consensus in a single hop network
in our abstract MAC layer model
with a single crash failure.
 \label{thm:singlehop}
\end{theorem}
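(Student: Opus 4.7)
The plan is to follow the classical FLP three-step structure, using Lemma~\ref{lem:2} as the model-specific inductive engine and adapting the remaining steps to the broadcast-with-acknowledgment setting. First I would establish the existence of a bivalent initial configuration of ${\cal A}$. For this I would use the standard hybrid argument: consider the $n+1$ initial configurations $C_0,\ldots,C_n$, where $C_i$ assigns input $1$ to the first $i$ processes under some fixed ordering and $0$ to the rest. Validity forces $C_0$ to be $0$-valent and $C_n$ to be $1$-valent, so some adjacent pair $C_i,C_{i+1}$ must differ in valency while differing only in the input of a single process $p$. If both were univalent, I would crash $p$ before it takes any step and run an identical schedule on the remaining $n-1$ processes from both configurations; since our fault budget of one covers this crash, the other processes cannot distinguish the two executions and must decide identically, contradicting the differing valencies. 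Hence one of $C_i,C_{i+1}$ must be bivalent.

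Second, I would build an infinite bivalent execution by iteratively invoking Lemma~\ref{lem:2}. Fix any round-robin ordering $u_1,u_2,\ldots$ of processes in which every node of $V$ appears infinitely often, and let $\alpha_0$ be the bivalent initial fragment from the previous step. For each $j\geq 1$, Lemma~\ref{lem:2} yields a finite extension $\alpha_{j-1}'$ of $\alpha_{j-1}$ such that $\alpha_j := \alpha_{j-1}'\cdot s_{u_j}$ is bivalent and $s_{u_j}$ is a valid step of $u_j$. The limit $\alpha_\infty$ is then an infinite execution consisting entirely of valid steps in which every process is scheduled infinitely often, which in particular ensures every broadcast is eventually followed by a valid acknowledgment step in the sense forced by our validity definition.

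Finally I would derive a contradiction. Since $\alpha_\infty$ is failure-free, contains infinitely many valid steps of each node, and delivers each broadcast to all neighbors within a finite number of steps before its acknowledgment, it is realizable by some abstract MAC layer scheduler whose unknown-but-finite $\ack$ bound can be chosen large enough to cover every broadcast-to-acknowledgment gap in the construction. Termination then forces some non-faulty process to decide at some finite prefix $\alpha_j$. But $\alpha_j$ is bivalent by construction, whereas any prefix in which a decision has occurred is univalent by definition, giving the desired contradiction. The main obstacle I expect is this last realizability check: one must verify that the infinite execution assembled from repeated Lemma~\ref{lem:2} applications is genuinely produced by some legal scheduler in our model, rather than being only a formal limit of valid fragments. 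This is precisely where the restriction of valency to valid schedulers pays off, since it guarantees that at every step of the construction the scheduling choice we need is actually available.
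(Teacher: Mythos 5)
Your proposal is correct and follows essentially the same route as the paper's proof: establish a bivalent initial configuration via the standard FLP hybrid argument, iterate Lemma~\ref{lem:2} in round-robin order to maintain bivalence forever, and contradict termination. The only difference is that you spell out the initial-configuration argument (which the paper delegates to a citation) and explicitly flag the scheduler-realizability issue, which the paper passes over silently.
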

\begin{proof}
Assume for contradiction such an algorithm exists.
Call it ${\cal A}$.
Using the standard argument
we first establish the existence of  a bivalent initial configuration of ${\cal A}$ (e.g., Lemma $2$ from~\cite{flp}).
Starting from this configuration, we keep applying Lemma~\ref{lem:2},
rotating through all $n$ nodes in round robin order,
to extend the execution in a way that keeps it bivalent.
Because we rotate through all nodes when applying Lemma~\ref{lem:2},
the resulting execution is fair in the sense that all nodes keep taking steps.
The termination property of consensus requires that nodes eventually decide.
By agreement when any node decides the execution becomes univalent.
By Lemma~\ref{lem:2}, however, our execution remains bivalent, so no node
must ever decide. This violates termination and therefore contradicts the assumption that ${\cal A}$
solves consensus.
\end{proof}



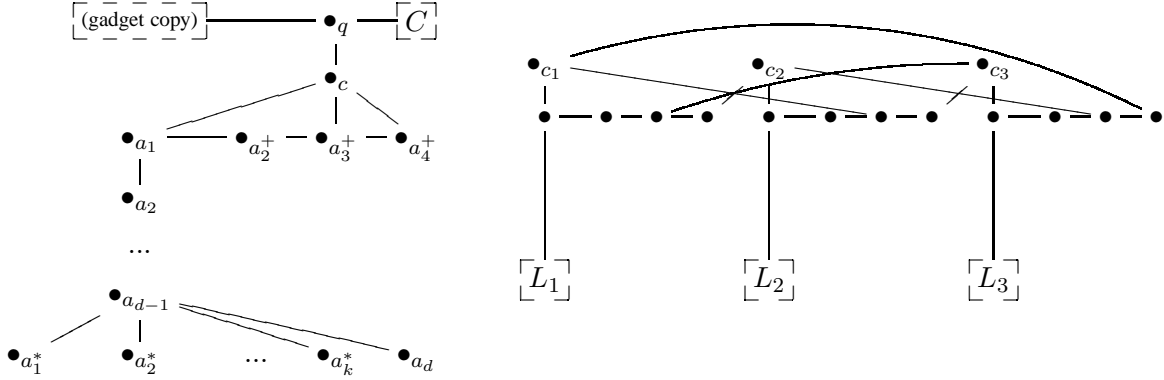
\begin{figure}[!t]
\centering
\begin{minipage}{.39\textwidth}
  \centering

\xymatrix@=.7em
{
                                 &   &   &                          &                                        &  \\
                                 &*+[F--]{\text{\scriptsize (gadget copy)}}    &   &   {\bullet}_{q} \ar@{-}[r] \ar@{-}[ll] & *+[F--]{C}   &  \\
                                 &   &   &   {\bullet}_{c} \ar@{-}[u]  &   &   \\
                                 & {\bullet} _{a_1}  \ar@{-}[r] \ar@{-}[urr] &  {\bullet}_{a^{+}_2} \ar@{-}[r] & {\bullet}_{a^{+}_3} \ar@{-}[r]  \ar@{-}[u]& {\bullet}_{a^{+}_4} \ar@{-}[ul] &   \\
                                 &  {\bullet} _{a_2} \ar@{-}[u]&   &   &   &   \\
                                 &  ... &   &   &   &   \\
                                 &  {\bullet}_{a_{d-1}} &   &   &   &   \\
  {\bullet}_{a^{*}_1} \ar@{-}[ur]  & {\bullet}_{a^{*}_2} \ar@{-}[u] & ...  & {\bullet}_{a^{*}_k} \ar@{-}[ull]& {\bullet}_{a_d}\ar@{-}[ulll] &   
 %
  }

\end{minipage}%
\begin{minipage}{.61\textwidth}
  \centering

\xymatrix@=.7em
{
 & {\bullet}_{c_1} & & & &{\bullet}_{c_2}  & & & & {\bullet}_{c_3} & & &\\
 & {\bullet} \ar@{-}[u] \ar@{-}[r] & {\bullet} \ar@{-}[r] & {\bullet}\ar@{-}[r]  \ar@{-}@/^/[urrrrrr]& {\bullet}\ar@{-}[ur]  & {\bullet}\ar@{-}[u]\ar@{-}[r]  
      & {\bullet}\ar@{-}[r] & {\bullet}\ar@{-}[r]  \ar@{-}[ullllll] & {\bullet}  \ar@{-}[ur] & {\bullet}\ar@{-}[u] \ar@{-}[r] & {\bullet} \ar@{-}[r] & {\bullet} \ar@{-}[r]  
        \ar@{-}[ullllll] & {\bullet} \ar@{-}@/_2pc/[ulllllllllll]\\
        & & & & & & & & & & & & \\
                & & & & & & & & & & & & \\
                        & & & & & & & & & & & & \\
  & *+[F--]{L_1} \ar@{-}[uuuu] & & & & *+[F--]{L_2} \ar@{-}[uuuu]  & & & & *+[F--]{L_3} \ar@{-}[uuuu]  & & & 
}

\end{minipage}
\caption{In {\bf Network A} (left) the $a$ nodes plus $c$ combine to comprise a {\em gadget}. The bridge node $q$ is connected to two copies
of this gadget at their $c$ nodes. It is also connected to all nodes in a clique $C$ used to adjust the total network size. In {\bf Network B} (right) the sub-graphs $L_1$, $L_2$, and $L_3$,
are each a copy of the sub-graph of the gadget of Network $A$ consisting of nodes at $a_2$ and below in the diagram
(i.e., nodes labelled $a_i$ for $i>1$, as well as the $a^*_j$ nodes.) $L_1$, $L_2$, and $L_3$, in other words, connect to the $a_1$ node
of the Network $A$ gadget.}
\label{fig:uid}
\end{figure}

\subsection{Consensus is Impossible without Unique Ids}
\label{sec:lower:unique}

Having proved that consensus is impossible with crash failures,
we consider the conditions under which it remains impossible {\em without} crash failures.
Recall, in wireless networks, unlike wired networks, the network configuration might be {ad hoc},
preventing nodes from having full {\em a priori} information on the participants.
Accordingly, in this section and the next we explore
the network information required to solve consensus.
We start here by investigating the importance of unique ids.
We call an algorithm that {\em does not} use unique ids an {\em anonymous} algorithm.
We prove below that consensus is impossible with anonymous algorithms, even if nodes know
the network size and diameter. 
We then provide a corollary that extends this result to the standard asynchronous network model.
To the best of our knowledge, this is the first result on the necessity of unique ids for consensus in multihop message passing networks.

\paragraph{Result.}
To prove our main theorem we leverage an indistinguishability result based on the network topologies shown
in Figure~\ref{fig:uid}. Due to the careful construction of these networks, we cannot prove our impossibility
holds for all $n$ and $D$ (network $B$, for example, requires that $n$ be divisible by $3$).
We can, however, prove that for every sufficiently large (even) $D$ and $n$, 
the problem is impossible for $D$ and some $n' = \Theta(n)$.

\begin{theorem}
There exists a constant integer $c\geq 1$,
such that for every
even diameter $D\geq 4$ and network size $n \geq D$,
there exists an $n' \in \{n,...,c\cdot n\}$,
such that no anonymous algorithm ${\cal A}$ guarantees to solve consensus in our abstract MAC layer 
model in all networks of
diameter $D$ and size $n'$.
\label{thm:nounique}
\end{theorem}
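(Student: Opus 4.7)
The plan is to derive a contradiction from the assumption that some anonymous deterministic algorithm $\mathcal{A}$ solves consensus in every network of diameter $D$ and size $n'\in\{n,\ldots,c\cdot n\}$. The key observation about any anonymous algorithm is that a node's behavior depends only on its initial value and its local history---the ordered sequence of broadcasts it initiates, messages it receives, and acknowledgments it gets. If I can exhibit two executions (one in Network $A$, one in Network $B$) in which some target node receives identical local histories through its decision event, then that node must decide the same value in both. By choosing initial values so that validity forces decision $0$ in the $A$-execution and decision $1$ in the $B$-execution, I reach a contradiction.

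First I would handle the sizing argument. The gadget of $\Theta(D)$ nodes in each construction pins down the diameter, while the clique $C$ attached to $q$ in Network $A$ and the analogous padding structures in Network $B$ are free parameters that can be tuned to hit any sufficiently large target size. For a suitable constant $c$, every $n\ge D$ admits some $n'\in\{n,\ldots,cn\}$ realizable by both constructions with diameter exactly $D$. Next I would set up the initial values: in Network $A$, assign every node in one gadget copy (together with $q$ and $C$) the value $1$ and every node in the other gadget copy the value $0$; in Network $B$, assign the analogous gadget component the value $0$ and the surrounding $c_i$-ring and its $L_i$-tails the value $1$. These assignments are chosen so that, after appeal to validity and agreement, the two executions must terminate with different decisions---this sets up the desired conflict for the indistinguishability argument to exploit.

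The heart of the proof is then to construct, round by round, a pair of schedules---one in each network---that deliver identical local histories to some carefully chosen target node inside the shared gadget structure (for instance, one of the $a_i$ or $a^*_j$ nodes), while remaining valid under the abstract MAC layer's ack semantics. The fact that the gadget sits inside both networks, and that the $a^+$ and $a^*$ nodes sit as local buffers around the critical vertices, is exactly what permits such matching schedules: these auxiliary nodes can absorb the structural mismatch between $A$'s two-gadget-bridged-through-$q$ topology and $B$'s three-tails-around-a-$c$-ring topology, by being scheduled to deliver to the target a stream of messages from the ``far side'' that looks the same under either global configuration. The main obstacle I foresee is precisely the acknowledgment requirement: unlike in the classical asynchronous model, I cannot delay a broadcast's delivery to one neighbor while completing it at another, because the sender's ack (and thus its next broadcast) is blocked until every non-crashed neighbor has received the current message. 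I expect the most delicate part of the proof to be an inductive construction in which, at each step, I simultaneously extend the two schedules so that (i) the target node's view remains identical across $A$ and $B$, (ii) every pending ack eventually fires within the unknown bound $F_{ack}$, and (iii) every broadcast eventually reaches all of its neighbors. Once that invariant is maintained up to the point where the target must decide, anonymity and determinism force the same decision in both executions, contradicting the validity-driven divergence set up in the previous paragraph.
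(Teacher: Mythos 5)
Your high-level machinery (the two networks $A$ and $B$, sizing via padding, synchronous-style scheduling, indistinguishability of a gadget node across the two topologies) matches the paper, but your choice of initial values breaks the argument at its hinge. You assign mixed values in \emph{both} executions: in $A$ one gadget gets $0$ and everything else gets $1$, and in $B$ one gadget component gets $0$ and the ring-plus-tails gets $1$. With both values present, validity forces nothing in either execution, so the ``validity-driven divergence'' you plan to contradict never materializes---there is no reason the $B$-execution must decide $1$. The paper avoids this by running network $B$ \emph{twice}, once with all nodes at $0$ and once with all nodes at $1$; validity plus termination then pin down both the decision value $b$ and a common termination bound $t$ for $\alpha_B^b$. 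The contradiction is then an \emph{agreement violation inside a single execution of $A$}: with $q$'s deliveries delayed for $t$ synchronous steps, each gadget $A_b$ runs in isolation and is indistinguishable from its copies in the all-$b$ execution of $B$, so the $0$-gadget decides $0$ and the $1$-gadget decides $1$ within the same run of $A$. Note also that agreement is a per-execution property, so ``the target node decides the same in the $A$-run and the $B$-run although it was supposed to decide differently'' is not by itself a contradiction unless validity actually forces the $B$-run's outcome.

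The mixed assignment also undermines the indistinguishability step itself. The reason a node $u$ in a gadget of $A$ cannot tell it is not in $B$ is that $B$ consists of three interconnected copies of the gadget and, when \emph{all} of $B$ starts with the same value, anonymity plus the synchronous scheduler make the three copies of every node evolve through identical states; hence it does not matter which member of $S_v$ a given copy of $u$ is wired to. With your mixed values that three-fold symmetry is destroyed, the copies of a node need not agree, and the structural property ``each $u'\in S_u$ has exactly one neighbor in $S_v$'' no longer yields matching message streams. Finally, your worry about hand-building ack-respecting schedules round by round is moot in the paper's construction: both networks use the lock-step synchronous scheduler, and the only nonstandard scheduling is withholding $q$'s deliveries (and hence its ack) for $t$ steps, which is legal because $F_{ack}$ is finite but unknown to the nodes.
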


\noindent Given some $D$ and $n$ that satisfy the theorem constraints,
 let $k$ be the smallest integer $k\geq 0$ such that $3(\frac{D-2}{2} +k) + 12 \geq n$.
Set $n' = 3(\frac{D-2}{2} + k) + 12$.
Consider networks $A$ and $B$ from Figure~\ref{fig:uid},
instantiated with $d=\frac{D-2}{2}$ and $k$ set to the value fixed above in defining $n'$.
In the case of network $A$, set the clique $C$ to contain enough nodes
to bring the total count in that network to equal the number of nodes in $B$.
The following claim follows from the structure of these networks
and our definitions of $k$ and $d$.

\begin{claim}
Networks $A$ and $B$, instantiated with the values described above,
 have size $n'$ and diameter $D$.
\label{claim:nounique:1}
\end{claim}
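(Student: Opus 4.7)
The claim amounts to two routine verifications---counting the nodes in each graph and bounding shortest-path distances---which I would carry out network by network.

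For node counts, each gadget in Network $A$ comprises the $d-1$ chain nodes $a_1,\dots,a_{d-1}$, the pendant $a_d$, the $k$ pendants $a^*_1,\dots,a^*_k$, the three $a^+$ nodes, and $c$, for a total of $d+k+4$ nodes. Adding the bridge $q$ and the clique $C$ gives $|V(A)| = 2(d+k+4)+1+|C|$, so setting $|C| := n'-(2d+2k+9) = d+k+3$ yields $|V(A)|=n'$; since $d \ge 1$ and $k \ge 0$ we have $|C| \ge 4$, so a clique of this size exists. For Network $B$, I sum $3$ top nodes, $12$ middle-row nodes, and three copies of $L_i$, where each $L_i$ contains the $d-1+k$ nodes $a_2,\dots,a_d,a^*_1,\dots,a^*_k$, giving $3 + 12 + 3(d-1+k) = 12 + 3d + 3k = n'$ by the definition of $n'$.

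For the diameter of Network $A$, the bridge $q$ is a cut vertex separating the two gadget copies from each other and from the clique $C$. Within any gadget, the unique shortest path from a leaf (any $a^*_j$ or $a_d$) to the apex $c$ is $\mathrm{leaf}\to a_{d-1}\to\cdots\to a_1\to c$ of length $d$ (the alternative route through the $a^+$ path is strictly longer). Hence every leaf is at distance exactly $d+1$ from $q$, and any pair of leaves in opposite gadgets realizes the distance $(d+1)+(d+1) = 2d+2 = D$. A direct case check bounds every remaining pair strictly below $D$: intra-gadget pairs are at distance $\le d+1$, and pairs involving the clique or an $a^+$ node are at distance $\le d+3$. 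Thus $\mathrm{diam}(A) = D$.

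For Network $B$, the analogous extremal pair consists of two leaves in distinct copies $L_i, L_j$. Each leaf lies at distance $d-1$ from the middle-row port (one of $m_1, m_5, m_9$) to which its $L_i$ attaches: $d-2$ hops up through $L_i$ to its top node $a_2$, plus one edge into the middle row. The crux is the sub-claim that the three ports $m_1, m_5, m_9$ are pairwise at distance exactly $4$ in the subgraph induced by $\{c_1, c_2, c_3\}$ and the twelve middle-row nodes. The upper bound follows from explicit paths such as $m_1 \to c_1 \to m_7 \to m_6 \to m_5$, and the matching lower bound is a short BFS-at-depth-$3$ check using the fact that the middle row decomposes into three disjoint $4$-paths whose only inter-block links pass through some $c_i$; threefold rotational symmetry among the blocks reduces this to a single BFS computation. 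Combining, the cross-$L$ leaf-to-leaf distance is $(d-1)+4+(d-1) = 2d+2 = D$, and every other pair is strictly closer by inspection, giving $\mathrm{diam}(B) = D$. The only conceptually nontrivial step is this distance-$4$ sub-claim; everything else is routine enumeration, and the degenerate case $D=4$, $d=1$ (where the chain collapses) is absorbed either by a minor tweak to the construction or by the unspecified constant in Theorem~\ref{thm:nounique}.
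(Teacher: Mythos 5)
Your proposal is correct, and in fact it supplies strictly more than the paper does: the paper asserts this claim with no proof at all (``follows from the structure of these networks and our definitions of $k$ and $d$''), so there is no competing argument to compare against. Your node counts ($d+k+4$ per gadget, $|C|=d+k+3$, and $3+12+3(d-1+k)=n'$ for Network $B$) and your distance computations all check out against Figure~\ref{fig:uid}, including the key sub-claim that the three ports are pairwise at distance exactly $4$ in the middle-row-plus-$\{c_1,c_2,c_3\}$ subgraph (e.g., $m_1\to c_1\to m_7\to m_6\to m_5$, with no length-$3$ path since the three $4$-paths are joined only through the $c_i$), which correctly mirrors the distance-$4$ separation $a_1\to c\to q\to c'\to a_1'$ between the two gadgets' $a_1$ nodes in Network $A$. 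Your flag on the degenerate case $d=1$ (i.e., $D=4$, where the chain $a_1,\dots,a_{d-1}$ collapses and the leaves have no $a_{d-1}$ to attach to) is a genuine, if minor, gap in the paper's construction rather than in your argument, and is rightly noted as fixable by a small tweak.
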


\noindent We define the {\em synchronous scheduler} in our model to be a message scheduler
that delivers messages in lock step rounds. That is, it delivers all nodes' current message
to all recipients, then provides all nodes with an {ack}, and then moves on to the
next batch of messages.
Furthermore, we assume no global time (or, equivalently, some small amount of time that
we can define as needed in the below proof) passes between these {\em synchronous steps.}
Fix some consensus algorithm ${\cal A}$ that does not use unique ids. 
For $b\in \{0,1\}$, let $\alpha_B^b$ be the execution of ${\cal A}$ in network $B$ (see the right network
in Figure~\ref{fig:uid})
with all nodes starting with initial value $b$ and message behaviors
 scheduled by the synchronous scheduler.
The following lemma follows directly from the definition of consensus and the fairness of the synchronous scheduler.

\begin{lemma}
There exists some $t\geq 0$ such that for $b\in \{0,1\}$, $\alpha_B^b$ terminates
by synchronous step $t$ with all nodes deciding $b$.
\label{lem:nounique:1}
\end{lemma}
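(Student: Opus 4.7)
My plan is to use the three consensus properties (validity, termination, agreement) together with the determinism of both $\mathcal{A}$ and the synchronous scheduler.

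First I would observe that, since $\mathcal{A}$ is deterministic and the synchronous scheduler resolves all nondeterminism in message delivery and acknowledgments in a fixed manner, for each $b \in \{0,1\}$ the execution $\alpha_B^b$ is uniquely determined by the initial configuration in which every node's input is $b$. Hence $\alpha_B^b$ is a single well-defined execution, not a family.

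Next I would apply the validity property of consensus: every decided value must be the initial value of some node. In $\alpha_B^b$ every node has initial value $b$, so the only value any node can possibly decide is $b$. Combined with the agreement property (which is vacuous here since only $b$ is admissible), this already fixes the decision value of any deciding node to be $b$ in $\alpha_B^b$.

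For the termination bound, I would note that the synchronous scheduler is fair in the abstract MAC layer sense: every broadcast is eventually acknowledged (in fact in one synchronous step) and every neighbor eventually receives the message, so the scheduler is a legal instance of our model. The termination property of $\mathcal{A}$ therefore requires that every node eventually decide in $\alpha_B^b$ (there are no crash failures assumed in this section). Let $t_b$ be the smallest synchronous step index by which all $n'$ nodes have decided in $\alpha_B^b$; this is well-defined and finite by termination and by the fact that $\alpha_B^b$ is a single deterministic execution. Finally, setting $t = \max(t_0, t_1)$ gives the uniform bound required by the lemma.

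I do not expect any real obstacle here; the entire content is just reading off validity, termination, and determinism in a fixed scheduler, which is why the excerpt advertises the lemma as following directly from the definitions.
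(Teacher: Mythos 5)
Your proposal is correct and matches the paper's reasoning: the paper gives no explicit proof, stating only that the lemma ``follows directly from the definition of consensus and the fairness of the synchronous scheduler,'' and your argument (validity forces the decision value to $b$, termination under the legal synchronous scheduler forces a finite decision step, and taking the maximum over $b\in\{0,1\}$ gives a uniform $t$) is exactly the intended justification.
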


Next, let $\alpha_A$ be an execution of ${\cal A}$ in network
$A$ (see the left network in Figure~\ref{fig:uid}) defined as follows:
(1) all nodes in one gadget start with initial value $0$ (call these nodes $A_0$),
all nodes in the other copy of the gadget start with initial value $1$ (call these nodes $A_1$);
(2) the bridge node $q$ and the nodes in component $C$ start with arbitrary initial values;
and (3) we fix the scheduler to schedule the steps of $A_0$ and $A_1$ like the synchronous
scheduler for
for $t$ steps (for the $t$ fixed in Lemma~\ref{lem:nounique:1}),
while delaying any message from node $q$ being delivered
until after these $t$ steps are complete. After this point, the scheduler
can behave as the synchronous scheduler for the full network.

The key argument in our proof is that a node in $A_b$ cannot distinguish itself
during the first $t$ steps of $\alpha_A$ from the same node in $\alpha_B^b$.
Intuitively, this follows because the network in $B$ is carefully constructed
to be symmetric, 
so nodes cannot tell if they are communicating with one copy of the network $A$ gadget
or multiple copies.
To formalize this argument, we introduce some notion that relates network $A$ to $B$.
Notice that network $B$ consists of three copies of the gadget from network $A$
(with some of the edges
from the connector node copies swapped to interconnect the copies).
For each node $u$ in a network $A$ gadget, therefore,
we can define  $S_u$  to be the set containing the three
nodes in network $B$ that correspond to $u$: that is, the nodes in $u$'s position
in the three gadget copies of $B$).
For example, consider node $c$ in the network $A$ gadget shown in Figure~\ref{fig:uid}.
By our above definition,  $S_c = \{c_1, c_2, c_3\}$.
We can now formalize our indistinguishability.

\begin{lemma}
Fix some $b\in \{0,1\}$ and some node $u$ in $A_b$.
The first $t$ steps of $u$ in $\alpha_A$
are indistinguishable from the first $t$ steps of the three nodes in $S_u$
in $\alpha_B^b$.
\label{lem:nounique:2} 
\end{lemma}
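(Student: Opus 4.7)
The plan is to induct on the synchronous step index $i\in\{0,1,\ldots,t\}$ and prove the stronger invariant that, after $i$ steps, for every $u\in A_b$ and every $v\in S_u$, the local states of $u$ in $\alpha_A$ and of $v$ in $\alpha_B^b$ coincide. Since $S_u$ contains one node from each of the three gadget copies of $B$, this invariant also says that the three copies of $B$ evolve in lock-step as mirrors of the $A_b$ gadget; this mutual mirroring is exactly what allows a cross-copy broadcast in $B$ to stand in for a same-copy broadcast in $A$. The base case is immediate: since ${\cal A}$ is anonymous, each initial state is determined solely by the initial value, and every node in $A_b$ as well as every node of $B$ in $\alpha_B^b$ starts with value $b$.

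For the inductive step, assume the invariant holds through step $i<t$. In a synchronous step every node broadcasts based on its current state, each broadcast reaches all neighbors, and each sender receives an ack. Because ${\cal A}$ is deterministic and anonymous, the post-step state of a node depends only on its pre-step state and the \emph{multiset} of messages it receives (not on sender identities). By the inductive hypothesis, $u$ and every $v\in S_u$ share a pre-step state and hence broadcast the same message, so it remains to show they receive the same multiset. For interior nodes of the $A_b$ gadget (all nodes except the connector $c$ and the top-row nodes $a^{+}_3$, $a^{+}_4$), the construction of $B$ places each $A$-neighbor in the same copy, and the hypothesis forces those $B$-neighbors to broadcast identical messages. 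For $a^{+}_3$ and $a^{+}_4$, the construction of $B$ replaces the $A$-neighbor $c$ by a cross-copy $c_j$; but by the mirroring part of the hypothesis, $c_j$ is in the same state as the absent same-copy $c$ and broadcasts the same message. Finally, for $u=c$, the scheduler silences $q$'s broadcasts during the first $t$ steps of $\alpha_A$, so $c$ receives only from its three gadget-neighbors $\{a_1,a^{+}_3,a^{+}_4\}$; by Figure~\ref{fig:uid}, each $v\in S_c$ has in $B$ precisely three neighbors --- a same-copy $a_1$ together with cross-copy copies of $a^{+}_3$ and $a^{+}_4$ from the other two gadget copies --- whose broadcasts, by mirroring, match. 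In every case the received multisets coincide, and the invariant is preserved.

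The main obstacle is this per-position neighborhood match at the top of the gadget: one must read off from Figure~\ref{fig:uid} that for every same-copy or $q$-bridge edge that $A$'s gadget has at the top row, $B$ provides a single compensating cross-copy edge to a symmetric position, so that under mutual mirroring of the three copies the multisets delivered to corresponding nodes agree. There are only four top-row positions per copy, so once this check is done the induction proceeds without surprise, and identical local states at every step $i\le t$ is exactly the indistinguishability asserted by the lemma.
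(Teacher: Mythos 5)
Your proof is correct and follows essentially the same route as the paper's: an induction on synchronous steps maintaining that each $u\in A_b$ and all of $S_u$ share the same state, justified by matching each $A_b$-neighbor of $u$ with exactly one $B$-neighbor of $u'\in S_u$ (the paper packages your per-position case analysis into a single structural property of the construction). Your explicit handling of the silenced bridge node $q$ and the remark that anonymity makes the state update depend only on the received multiset are both consistent with, and slightly more detailed than, the paper's argument.
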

\begin{proof}
%
We begin by noting the following property of our networks that follows directly from its structure
(in the following, we use the notation $N_{A_b}$ to indicate the neighbor function of the subgraph
of network $A$ consisting only of the nodes in $A_b$):
%
%
(*) {\em Fix any $u\in A_b$ and $u'\in S_u$. For every $v\in N_{A_b}(u)$, $u'$ is connected
to exactly one node in $S_v$. There are no other edges adjacent to $u'$ in $B$.}
%
We now leverage property (*) in proving the following induction argument, which itself
directly implies our lemma statement.
The below induction is on the number of synchronous steps in the $\alpha$ executions.
%

{\em Hypothesis:} $\forall u\in A_b$,  $0 \leq r \leq t$:
after $r$ steps,
$u$ in $\alpha_A$ has the same state as the nodes in $S_u$ in $\alpha_B^b$.

{\em Basis ($r=0$):} Because we assume no unique ids and the same initial values for all relevant nodes, 
 the hypothesis is trivially true after $0$ steps.

{\em Step:} Assume the hypothesis holds through some step $r, 0 \leq r < t$. We will now show it holds for step $r+1$.
By our hypothesis, for each $w\in A_b$, the nodes in $S_w$ will send the same message as $w$ during step $r+1$
(as this message is generated deterministically by the nodes' state after step $r$).
Now consider a particular $u\in A_b$ and a particular copy $u' \in S_u$ in network $B$.
By property (*), for each node $v\in N_{A_b}(u)$ that sends a message to $u$ in $r+1$,
$u'$ is connected to a single node in $S_v$. By our above argument, this node in $S_v$
will send the same message to $u'$ as $v$ sends to $u$. Furthermore, (*) establishes
that there are no other edges to $u'$ that will deliver messages at this point. It follows that $u'$ will receive the same message set
in $r+1$ in $\alpha_B^b$ as $u$ receives in $r+1$ in $\alpha_A$. They will end $r+1$,
therefore, in the same state.
\end{proof}

\noindent We now leverage Lemma~\ref{lem:nounique:2} to prove our main theorem.

\begin{proof}[Proof of Theorem~\ref{thm:nounique}.]
Assume for contradiction that there exists an anonymous algorithm ${\cal A}$ that guarantees
to solve consensus for a diameter $D$ and network size $n'$ specified to be impossible by the theorem statement.
Fix some nodes $u\in A_0$ and $v\in A_1$ such that $u$ and $v$ are in the same position in their respective
gadgets in network $A$.
Fix some $w\in S_u = S_v$.
By Lemma~\ref{lem:nounique:1}, $w$ decides $0$ within $t$ steps
of $\alpha_B^0$. Combining this observation
with Lemma~\ref{lem:nounique:2}, 
applied to $u$ and $b=0$, it follows that $u$ will decide $0$ in $\alpha_A$.
By agreement, it follows that all nodes must decide $0$ in $\alpha_A$---including $v$.
We can, however, apply this same argument to $\alpha_B^1$, $v$ and $b=1$, to 
determine that $v$ decides $1$ in $\alpha_A$. A contradiction.
 \end{proof}

\noindent We conclude with a corollary for the standard asynchronous model that follows
from the fact that our model is strictly stronger and this result concerns a lower bound.

\begin{corollary}
There exists a constant integer $c\geq 1$,
such that for every
even diameter $D\geq 4$ and network size $n \geq D$,
there exists an $n' \in \{n,...,c\cdot n\}$,
such that no anonymous algorithm ${\cal A}$ guarantees to solve consensus in the asynchronous network model
with broadcast communication and no advance knowledge of the network topology,
in all networks of
diameter $D$ and size $n'$.
\end{corollary}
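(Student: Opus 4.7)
The plan is a straightforward reduction from Theorem~\ref{thm:nounique}. The abstract MAC layer model is strictly stronger than the asynchronous broadcast model in what it guarantees to the algorithm---it adds acknowledgments and the property that by the time an ack returns every non-faulty neighbor has received the message---whereas the asynchronous model merely promises eventual delivery. Any lower bound in the stronger model therefore transfers to the weaker one: if consensus were solvable in the asynchronous model by some anonymous algorithm, that algorithm could be run on top of an abstract MAC layer and contradict Theorem~\ref{thm:nounique}.

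Concretely, I would fix the $n'$ produced by Theorem~\ref{thm:nounique} and assume for contradiction an anonymous consensus algorithm $A'$ in the asynchronous broadcast model that works on all topologies of diameter $D$ and size $n'$ with no advance topological knowledge. The idea is to define an abstract MAC layer algorithm $A$ that simulates $A'$ step for step while discarding each ack event; if $A'$ requests a new broadcast while a previous one is still outstanding, $A$ enqueues the request and issues it when the next ack arrives (this queuing step is needed because the MAC layer forbids concurrent broadcasts from a single node). Given any execution of $A$ in the abstract MAC layer on a topology of diameter $D$ and size $n'$, erasing the ack events and keeping only broadcast and receive events should yield an execution of $A'$ against some legal asynchronous scheduler: the MAC layer delivery guarantee implies eventual delivery, and the finite (if unknown) $\ack$ latency ensures every buffered broadcast is eventually issued, so the induced asynchronous schedule is fair whenever the underlying MAC layer schedule is.

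The remaining verification is that agreement, validity, and termination of $A'$ on the projected execution carry over to $A$ on the original execution. Agreement and validity are immediate because decision events are generated identically in both views. The point I expect to be the main (though mild) obstacle is checking that termination survives the buffering step: since $A'$ must decide in finite time along every fair asynchronous schedule, and the MAC layer's bounded $\ack$ delay keeps the projected schedule fair, $A'$'s finite decision time should lift to a finite decision time for $A$. Once this is established, $A$ is an anonymous abstract MAC layer algorithm solving consensus on all networks of diameter $D$ and size $n'$, contradicting Theorem~\ref{thm:nounique} and proving the corollary.
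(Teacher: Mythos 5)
Your proposal is correct and is exactly the argument the paper has in mind: the paper dispenses with the proof in a single sentence (``follows from the fact that our model is strictly stronger and this result concerns a lower bound''), and your simulation---running the asynchronous algorithm on top of the abstract MAC layer, discarding acks, queuing pending broadcasts, and checking that the projected execution is a fair asynchronous one---is the standard formalization of that sentence. The buffering and fairness details you flag are the right ones to check, and they go through as you describe.
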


\subsection{Consensus is Impossible without Knowledge of $n$}
\label{sec:lower:n}

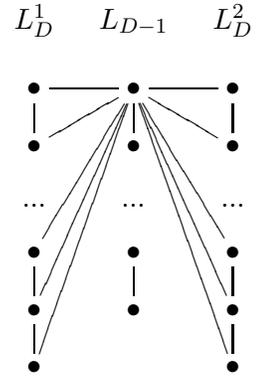
\begin{wrapfigure}{c}{0.3\textwidth}
%
\centerline{
\xymatrix @=1em
{
L_D^1   & L_{D-1}   &  L_D^2 \\
\bullet  \ar@{-}[d]   \ar@{-}[r] &   \bullet  \ar@{-}[d]     & \bullet  \ar@{-}[d]  \ar@{-}[l] \\
\bullet   \ar@{-}[ur]  &   \bullet     & \bullet  \ar@{-}[ul] \\
 ...  &  ...     &  ... \\
\bullet   \ar@{-}[d]   \ar@{-}[ruuu] &   \bullet   \ar@{-}[d]    & \bullet  \ar@{-}[d]  \ar@{-}[luuu]  \\
\bullet  \ar@{-}[d]  \ar@{-}[ruuuu] &   \bullet     & \bullet  \ar@{-}[d] \ar@{-}[luuuu]  \\
\bullet  \ar@{-}[ruuuuu] &       & \bullet \ar@{-}[luuuuu] 
}
}
\caption{The {$K_D$ network. Note: $L_{D-1}$ contains $D$ nodes.} 
%
}
\label{fig:kd} 
\end{wrapfigure}

In Section~\ref{sec:lower:unique}, we proved that consensus in our model requires unique ids.
Here we prove that even with unique ids and knowledge of $D$, nodes still need knowledge
of $n$ to solve the problem (in multihop networks). 
%
%
Our strategy for proving this theorem is an indistinguishability argument of a similar style
to that used in Section~\ref{sec:lower:unique}.
In more detail, consider network $K_D$ with diameter $D$ shown in Figure~\ref{fig:kd}.
Imagine that we start the $D+1$ nodes in sub-graph $L_D^1$ (resp. $L_D^2$) with initial value $0$ (resp. $1$).
If we delay message delivery long enough between $L_{D-1}$ and its neighbors in $K_D$,
the nodes in $L_D^i$ cannot distinguish between being partitioned in $K_D$
or executing by themselves in a network. Diameter knowledge does not distinguish these cases.

To formalize this argument,
we first assume w.l.o.g.
that nodes continually send messages.
In the following, fix some consensus algorithm ${\cal A}$.
Let $L_d$, for integer $d\geq 1$, be the network graph consisting of $d+1$ nodes in a line.
Let $\alpha^b_d$, for $b\in \{0,1\}$ and some integer $d\geq 1$, 
be the execution of ${\cal A}$ in $L_d$, where all nodes begin with initial value $0$
and message behavior is scheduled by the synchronous scheduler (defined in Section~\ref{sec:lower:unique}).
The following lemma follows from the validity and termination properties of consensus.

\begin{lemma}
There exists some integer $t\geq 0$, such
that for every $b\in \{0,1\}$ and $d\geq 1$, $\alpha^b_d$ terminates after $t$ synchronous steps with all nodes deciding $b$.
\label{lem:uniqueanddiameter:1}
\end{lemma}

\noindent For a given diameter $D>1$, we define 
the network graph $K_D$ 
to consist of two copies of $L_D$ (call these $L_D^1$ and $L_D^2$) and the line $L_{D-1}$, with an edge added from every node
in $L_D^1$ and $L_D^2$ to some fixed endpoint of the $L_{D-1}$ line.
Notice that by construction, $K_D$ has diameter $D$. (See Figure~\ref{fig:kd}.)
Next, 
we 
define the {\em semi-synchronous scheduler}, in the context of network graph $K_D$,
to be a message scheduler that delivers messages amongst nodes in $L_D^1$ and amongst nodes in $L_D^2$,
in the same manner as the synchronous scheduler for $t$ synchronous steps (for the $t$ provided by Lemma~\ref{lem:uniqueanddiameter:1}).
During this period, the semi-synchronous scheduler does {\em not} deliver any messages from the endpoint of the $L_{D-1}$
line to nodes in $L_D^1$ or $L_D^2$. After this period, it behaves the same as the synchronous scheduler.
Let $\beta_D$ be the execution of ${\cal A}$ in $K_D$ with: (1) all nodes in $L_D^1$ starting with initial value $0$;
(2) all nodes in $L_D^2$ starting with initial value $1$; (3) all nodes in $L_{D-1}$ starting with arbitrary initial values;
and (4) the semi-synchronous scheduler controlling message actions.
With these definitions established, we can prove our main theorem.

\begin{theorem}
For every $D>1$, no algorithm ${\cal A}$ guarantees to
solve consensus in our abstract MAC layer model
in all networks of diameter $D$.
\label{thm:uniqueanddiameter}
\end{theorem}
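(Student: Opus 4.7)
The plan is to run a partitioning/indistinguishability argument directly parallel to the one used for Theorem~\ref{thm:nounique}. Assume for contradiction that some algorithm $\cal A$ solves consensus in every network of diameter $D$ (using only knowledge of $D$, not $n$). The target is to show that in $\beta_D$, the nodes in $L_D^1$ all decide $0$ while the nodes in $L_D^2$ all decide $1$, contradicting agreement.

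The central step is to prove, for each $b\in\{0,1\}$ and each node $u\in L_D^b$, that the first $t$ synchronous steps of $u$ in $\beta_D$ are indistinguishable from the first $t$ synchronous steps of the corresponding node $u'$ in $\alpha_D^b$ (where $t$ is the bound from Lemma~\ref{lem:uniqueanddiameter:1}). I would establish this by a straightforward induction on the synchronous step number $r\le t$, with hypothesis: for every $u\in L_D^b$, after $r$ steps $u$'s local state in $\beta_D$ equals $u'$'s local state in $\alpha_D^b$. The base case is immediate since the two nodes share the same initial value and (having unique ids assigned in the same way within the line) the same code. For the inductive step, note that the sub-graph of $K_D$ induced on $L_D^b$ is a copy of $L_D$, so each $u\in L_D^b$ has the same neighbors within $L_D^b$ as $u'$ has in $L_D$; by the semi-synchronous scheduler definition, during the first $t$ steps $u$ receives no message from the $L_{D-1}$ endpoint, so the set of messages received by $u$ in step $r+1$ of $\beta_D$ matches exactly the set received by $u'$ in step $r+1$ of $\alpha_D^b$. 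Since states agreed after step $r$ by induction, the transmitted messages also agree, closing the induction.

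Given this indistinguishability, Lemma~\ref{lem:uniqueanddiameter:1} implies that every node in $L_D^1$ decides $0$ within $t$ synchronous steps of $\beta_D$, and every node in $L_D^2$ decides $1$ within $t$ synchronous steps of $\beta_D$. This contradicts agreement and finishes the proof.

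The main subtlety I expect to have to be careful about is the interaction between the semi-synchronous scheduler and the abstract MAC layer's acknowledgment requirement: a broadcast by $u\in L_D^b$ in $K_D$ cannot be acked until it has been delivered to all its neighbors in $K_D$, including the $L_{D-1}$ endpoint, whereas in $L_D$ the ack comes after delivery to only the line neighbors. The fix is to specify the semi-synchronous scheduler so that during the first $t$ steps it does deliver messages from $L_D^b$ nodes to the $L_{D-1}$ endpoint (and then issues the ack), while withholding delivery in the reverse direction; from $u$'s local viewpoint this is indistinguishable from the synchronous scheduler on $L_D$, since $u$ only observes its own receives and acks, not which further nodes received its broadcast. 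Once this scheduling detail is pinned down, the induction and the final contradiction go through cleanly.
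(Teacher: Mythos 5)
Your proposal is correct and follows essentially the same route as the paper: the paper's proof likewise derives the contradiction by arguing that, under the semi-synchronous scheduler, $\beta_D$ is indistinguishable from $\alpha^0_D$ for $L_D^1$ and from $\alpha^1_D$ for $L_D^2$ for the first $t$ steps, then invokes Lemma~\ref{lem:uniqueanddiameter:1}. Your explicit induction and your handling of the acknowledgment subtlety (delivering to the $L_{D-1}$ endpoint before acking while withholding the reverse direction) make precise details that the paper treats as ``straightforward to see,'' and are consistent with its definition of the semi-synchronous scheduler.
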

\begin{proof}
Assume for contradiction that ${\cal A}$ guarantees to solve consensus in all networks
of diameter $D$, for some fixed $D>1$.
By the definition of the semi-synchronous scheduler,
it is straightforward to see that
$\beta_D$ is indistinguishable from $\alpha^0_D$ for nodes in $L_D^1$,
and indistinguishable from $\alpha^1_D$ for nodes in $L_D^2$,
for the first $t$ synchronous steps.
Combining Lemma~\ref{lem:uniqueanddiameter:1} with our indistinguishability,
we note that nodes in $L_D^1$ will
decide $0$ in $\beta_D$ while nodes in $L_D^2$ will decide $1$.
Therefore, ${\cal A}$ does not satisfy agreement in $\beta_D$.
We constructed $K_D$, however, so that it has a diameter of $D$.
Therefore, ${\cal A}$ guarantees to solve consensus (and thus satisfy agreement) in this network. A contradiction.
\end{proof}

\subsection{Consensus Requires $\Omega(D\cack)$ Time}
\label{sec:lower:time}

The preceding lower bounds all concerned computability.
For the sake of completeness, we conclude by considering complexity.
The $\Omega(D\cack)$ time bound claimed below
is established by a partitioning argument.

\begin{theorem}
No algorithm can guarantee to solve consensus in our abstract MAC layer model
in less than $\lfloor \frac{D}{2} \rfloor \ack$ time.
\label{thm:time}
\end{theorem}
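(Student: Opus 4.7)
The plan is a classical partition/indistinguishability argument lifted into the abstract MAC layer. I would work in the line graph $L_D$ on $D+1$ vertices $v_0,v_1,\dots,v_D$, which has diameter exactly $D$. Let $k=\lfloor D/2\rfloor$ and suppose, for contradiction, that some algorithm $\mathcal{A}$ always decides within time $t<k\cdot\ack$. Fix the \emph{lazy synchronous} scheduler that delivers every broadcast together with its acknowledgment exactly $\ack$ time units after the call to \emph{broadcast}; this scheduler is clearly admissible in the model. Assuming, w.l.o.g.\ as in Section~\ref{sec:lower:n}, that every node immediately initiates a new broadcast upon receiving an ack, the execution under this scheduler decomposes into synchronous rounds of length $\ack$. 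Compare three executions of $\mathcal{A}$ under this scheduler: $\alpha_0$ in which all initial values are $0$; $\alpha_1$ in which all initial values are $1$; and a mixed execution $\alpha$ in which $v_0,\dots,v_{k-1}$ start with $0$ and $v_k,\dots,v_D$ start with $1$. By validity and termination, every node decides $0$ in $\alpha_0$ and $1$ in $\alpha_1$, both by time $t$.

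The heart of the argument is the following inductive claim: for every integer $j\ge 0$ and every node $v_i$, the state of $v_i$ at time $j\cdot\ack$ in each of the three executions depends only on the initial values of nodes at graph distance at most $j$ from $v_i$. The base case $j=0$ is immediate. For the inductive step, observe that between times $(j-1)\ack$ and $j\cdot\ack$ the only messages $v_i$ receives are the round-$j$ broadcasts of its immediate neighbors; the content of each such broadcast is determined by its sender's state at time $(j-1)\ack$, which by the induction hypothesis depends only on initial values of nodes within distance $j-1$ of that sender, hence within distance $j$ of $v_i$. Applying the claim with any $j\le k-1$ (which covers all times $t<k\cdot\ack$, since $v_0$'s state does not change strictly between round boundaries under this scheduler), the state of $v_0$ at time $t$ depends only on initial values in $\{v_0,\dots,v_{k-1}\}$, which are all $0$ in both $\alpha$ and $\alpha_0$; hence $v_0$'s entire local history through time $t$ is identical in $\alpha$ and $\alpha_0$. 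The symmetric argument for $v_D$ uses that $v_D$'s distance-$(k-1)$ neighborhood is $\{v_{D-k+1},\dots,v_D\}\subseteq\{v_k,\dots,v_D\}$ (since $D-k+1\ge k+1$), all of which hold value $1$ in both $\alpha$ and $\alpha_1$.

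Combining the two indistinguishabilities, $v_0$ is forced to decide $0$ and $v_D$ is forced to decide $1$ in $\alpha$, violating agreement and yielding the contradiction. The main obstacle I anticipate is the bookkeeping in the inductive step: one must argue carefully that under the lazy synchronous scheduler a node's state changes only at round boundaries, so ``state at time $t$'' is well-defined and equals the state at $\lfloor t/\ack\rfloor\cdot\ack$; and one must handle the $j=1$ case separately, noting that the round-$1$ broadcasts are already in flight from time $0$ and therefore carry only each broadcaster's own initial value. Once this is in place, the rest is just the classical light-speed partition argument transported into the abstract MAC layer.
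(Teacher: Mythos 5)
Your proposal is correct and follows essentially the same route as the paper's proof: a line of diameter $D$, a synchronous scheduler that delays each round by the maximum $\ack$, and a partition of the line into a $0$-half and a $1$-half so that indistinguishability from the all-$0$ and all-$1$ executions plus validity forces the two endpoints to decide differently. The paper states this as a brief "standard partitioning argument"; you have simply filled in the speed-of-information induction it implicitly relies on.
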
 
\begin{proof}
Fix some $D$. Consider a line of diameter $D$ consisting of nodes $u_1,u_2,...,u_{D+1}$, arranged in that order.
Consider an execution of a consensus algorithm in this network with a variant of the synchronous scheduler
from Section~\ref{sec:lower:unique} that delays the maximum $\ack$ time between each synchronous step.
In $\lfloor \frac{D}{2} \rfloor \ack$ time, the endpoints cannot hear from beyond their nearest half of the line.
If we assume they must decide by this deadline we can apply a standard partitioning argument to create an agreement violation.
In particular, if one half starts with initial value $0$ and the other initial value $1$,
the endpoint of the first half must decide $0$ and the endpoint of the second must decide $1$ (by indistinguishability and validity),
creating an agreement violation. 
\end{proof}

\section{Upper Bounds}
\label{sec:upper}

In Section~\ref{sec:lower}, we proved fundamental limits
on solving consensus in our model.
In this section, we prove these bounds optimal with matching upper bounds.
We consider both {\em single hop} (i.e., the network graph is a clique)
and {\em multihop} (i.e., the network graph is an arbitrary connected graph) networks.
Due to the impossibility result from Section~\ref{sec:lower:crash}, we assume no crash failures
in the following.

\subsection{Consensus in Single Hop Networks}
\label{sec:upper:single}

Here we describe an algorithm---{\em two-phase consensus}---that guarantees to solve
consensus in single hop network topologies in an optimal $O(\ack)$ time.
It assumes unique ids but does not require knowledge of $n$.
This opens a separation with the standard broadcast asynchronous model (which does
not include acknowledgments) where
consensus is known to be impossible under these conditions~\cite{abboud:2008}.
%
%
%
%
\begin{wrapfigure}{L}{0.5\textwidth}
\begin{minipage}{0.5\textwidth}
\begin{algorithm}[H]
  \caption{Two-Phase Consensus (for node $u$)}      
       \begin{algorithmic}[1]
  \scriptsize
 \State $v \gets$ initial value from $\{0,1\}$
 \State $R_1 \gets \{\langle \text{phase 1}, \id{u}, v \rangle\}$  
 \State \Comment{Phase $1$}
 \State {\bf broadcast}$(\langle \text{phase 1}, \id{u}, v \rangle)$  
 \While{waiting for $ack$}
 \State add {\bf received} messages to $R_1$
 \EndWhile
 \If{$\langle \text{phase 1,*,} 1-v \rangle\in R_1$ or $\langle \text{phase 2,*, bivalent} \rangle\in R_1$}
 	\State $status\gets$ bivalent
\Else
	\State $status \gets$ decided$(v)$
\EndIf
\State \Comment{Phase 2}
\State {\bf broadcast}$(\langle  \text{phase 2}, \id{u}, status \rangle)$
\State $R_2 \gets \{\langle  \text{phase 2}, \id{u}, status \rangle\}$
\While{waiting for $ack$}
     \State add {\bf received}  messages to $R_2$
\EndWhile
\State $W \gets$ every unique id in $R_1$ and $R_2$ \Comment{Witness list created}
\While{$\exists id\in W$ s.t. $\langle \text{phase 2, id,*} \rangle \notin R_1 \cup R_2$}
     \State add {\bf received} phase $2$ messages to $R_2$
\EndWhile
\If{$\langle \text{phase 2, *, decided}(0) \rangle \in R_2$}
  \State {\bf decide} 0
\Else
\State {\bf decide} 1
\EndIf
   \end{algorithmic}

       \label{alg:singlehop} 
      \end{algorithm}
\end{minipage}
\end{wrapfigure}

%
%
%
%
%
%
%
The pseudocode for our algorithm is presented in Algorithm~\ref{alg:singlehop}.
Here we summarize its operation:
Each node $u$ executes two {\em phases}.
At the beginning of the first phase, $u$ broadcasts its unique id and its initial value $v_u\in \{0,1\}$.
Node $u$ considers its first phase complete once it receives an acknowledgment for its first broadcast.
At this point, $u$ will choose its {\em status}.
If $u$ has seen evidence of a different initial value in the system by this point (i.e., it sees a phase $1$ message
for a different value or a {\em bivalent} phase $2$ message), 
it sets its status to {\em bivalent}.
Otherwise, it sets it to {\em decided $v_u$}.
Node $u$ now begins phase $2$ by broadcasting its status and id.
Once $u$ finishes this phase $2$ broadcast, it has two possibilities.
If its status is {\em decided}, then it can decide its initial value and terminate.
Otherwise,
 it constructs a {\em witness set} $W_u$,
consisting of every node it has heard from so far in the execution. 
It waits until it has received a phase $2$ message from {\em every} node
in $W_u$.
At this point, if the set contains any message of the form {\em decided $v_w$},
then it decides $v_w$.
Otherwise, it decides default value $1$.
We now establish the correctness of this strategy.

 
\begin{theorem}
The two-phase consensus algorithm solves consensus in
 $O(\ack)$ time
in our abstract MAC layer model in single hop networks with unique ids.
\label{thm:single:upper}
\end{theorem}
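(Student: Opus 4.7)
The plan is to establish the three consensus properties---agreement, validity, termination---together with the $O(\ack)$ time bound. Termination and the time bound are nearly immediate from the MAC guarantee that every broadcast acks within $\ack$ time: in the single-hop topology phase $1$ completes by $\ack$ and phase $2$ by $2\ack$ at every node, and the witness-wait loop exits once each (finite) node in $W_u$ has sent its phase-$2$ message, which happens by time $2\ack$ at the latest. I would dispose of these first.

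The substantive content is agreement, which I would reduce to one key lemma: if any node $u$ ends phase $1$ with status $\text{decided}(v)$, then (i) no node ends phase $1$ with status $\text{decided}(1-v)$, and (ii) every deciding node decides $v$. Part (i) uses ack synchronization in the clique: if $u$ has $\text{decided}(0)$ and $w$ has $\text{decided}(1)$, then WLOG $u$'s phase-$1$ ack fires first, so by the MAC ack property $w$ received $u$'s value-$0$ phase-$1$ message before freezing $R_1^w$ at its own ack, forcing $w$'s status to be bivalent---a contradiction. This part is routine.

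The main obstacle is (ii), which must handle the case that a bivalent node $y$ does not include $u$ in its witness set $W_y$. I would proceed by case analysis. If $u \in W_y$, then $y$ explicitly waits for $u$'s phase-$2$ message and observes $\text{decided}(v)$, decoding the correct decision. If $u \notin W_y$, then $u$'s phase-$1$ ack must fire strictly after $y$'s phase-$2$ ack, because otherwise the MAC ack property would have delivered $u$'s phase-$1$ message to $y$ by then and placed $u$ in $W_y$. This forces $u$ to still be in phase $1$ when $y$'s already-acked phase-$2$ message arrives at $u$, so that message lands in $R_1^u$; if $y$'s status were bivalent, $u$ would then have set its own status to bivalent, contradicting $\text{decided}(v)$. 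Hence by (i), $y$'s status must itself be $\text{decided}(v)$, and $y$'s own phase-$2$ message in $R_2^y$ drives $y$ to decide $v$. Validity falls out as a corollary: if every node starts with value $v$, no phase-$1$ message carries $1-v$ and the same ack argument rules out any bivalent phase-$2$ message reaching any node before it freezes $R_1$, so every status is $\text{decided}(v)$ and every node decides $v$; conversely, the key lemma already shows any value that is actually decided was some node's input.
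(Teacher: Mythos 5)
Your proposal is correct and takes essentially the same route as the paper's proof: the ack-ordering argument showing at most one decided status value can exist, followed by the witness-set dichotomy in which the case $u \notin W_y$ is resolved by noting that $y$'s already-acknowledged bivalent phase-$2$ message would then land in $u$'s phase-$1$ receive set and block $u$'s decided status. The only difference is framing---the paper derives a contradiction to show that case cannot occur for a bivalent node, while you conclude directly that $y$ must itself carry the decided status---plus your somewhat more explicit treatment of validity and termination, which the paper dismisses as straightforward.
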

\begin{proof}
Validity and termination are straightforward to establish.
We turn our attention, therefore, to agreement.
%
If no node ends up with 
{\em status = decided$(0)$}, then $1$ is the only possible decision value.
The interesting case, therefore, is when some node $u$ does set  {\em status $\gets$ decided$(0)$}
after its phase $1$ broadcast completes.
Let $S$ be the subset of nodes that began with initial value $1$ (if any). 
By assumption, $u$'s phase $1$ broadcast completed before any node in $S$,
as, otherwise, $u$ would have seen evidence of a $1$ before setting $status$,
preventing it from choosing {\em decided}$(0)$.
It follows that every node in $S$ must set $status$ to {\em bivalent.}
We know, therefore, that it is impossible to have both {\em decided}$(1)$ and {\em decided}$(0)$
in the system. We are left to show that if there is {\em decided}$(0)$ in the system, then all nodes 
end up deciding $0$.
As before, let $u$ be a node with status {\em decided}$(0)$.
Now let $v$ be a node with status {\em bivalent}.
We consider two cases concerning $u$ and $v$'s interaction.

In the first case, assume $v$ receives a message from $u$ before $v$ finishes its phase $2$ broadcast.
It follows that $u$ will be placed in $v$'s witness list, $W$.
The algorithm now requires $v$ to wait for $u$'s phase $2$ broadcast before deciding.
It will therefore see that $u$ has a status of {\em decided}$(0)$, requiring $v$ to decide $0$.
In the second case, $v$ does not receive a message from $u$ before $v$ finishes
its phase $2$ broadcast. Accordingly, $u$ is not in $v$'s witness set $W$.
This might be problematic as it could allow $v$ to decide before it sees a {\em decided}$(0)$ message.
Fortunately, we can show this second case cannot happen.
If $v$ had not heard {\em any} message from $u$ by the time it finished its phase $2$ broadcast,
it follows that $u$ receives this broadcast before it finishes its phase $1$ broadcast.
But $v$'s phase $2$ broadcast has a {\em bivalent} status. By the algorithm, this would 
prevent $u$ from setting its status to {\em decided}$(0)$---contradicting our assumption that $u$ has
a {\em decided} status.
\end{proof}



\subsection{Consensus in Multihop Networks}
\label{sec:upper:multihop}

We now describe a consensus algorithm for the multihop setting
that guarantees to solve consensus in $O(D\cack)$ time.
It assumes unique ids and knowledge of $n$ (as required by the lower bounds
of Section~\ref{sec:lower}), but makes no additional assumptions about the participants or network
topology.
Notice, this solution does not {\em replace} the single hop algorithm of Section~\ref{sec:upper:single},
as this previous algorithm: (1) is simpler; (2) has a small constant in its time complexity (i.e., $2$);
and (3) does not require knowledge of $n$.\footnote{This lack of knowledge of $n$ does
not violate the lower bound of Section~\ref{sec:lower:n}, as this lower bound requires the use
of a multihop network topology.}


Our strategy for solving consensus in this setting is to leverage the logic
of the PAXOS consensus algorithm~\cite{paxos,paxos-simple}. 
This algorithm was designed and analyzed for the asynchronous network model
with bounded crash failures.
Here we apply the logic to our wireless model with no crash failures.
The main difficulty we face in this effort is that nodes do not know the topology of the network
or the identity of the other participants in advance.
To overcome these issue we connect the PAXOS logic with a collection of sub-routines we
call {\em services}, which are responsible for efficiently delivering messages, electing the leaders
needed by PAXOS for liveness, and telling the proposers when to generate new proposal numbers.
We call this combination of PAXOS logic with our model-specific services
  {\em wireless PAXOS} (wPAXOS).

If we were satisfied with a non-optimal $O(n\cack)$ time complexity, the communication services
could be implemented with a simple flooding logic (the first time you see a message, re-broadcast),
and the leader election service could simply return the largest id seen so far.
To obtain an optimal $O(D\cack)$ time complexity, however, requires a more intricate solution.
In particular, when a proposer is waiting to hear from a majority of acceptors, 
we cannot afford for it to receive each response individually (as each message can only hold a constant number
of unique ids, and this would therefore require a proposer to receive $\Theta(n)$ messages).
Our solution is to instead  have nodes execute a distributed Bellman-Ford style iterative refinement strategy to establish
shortest-path routing trees rooted at potential leaders.
We design this service such that once the leader election service stabilizes,
a tree rooted at this leader will complete soon after (if it is not already completed).
These trees are then used to safely {\em aggregate} responses from acceptors: a strategy
that leverages the fact that PAXOS only requires the total {\em count} of a given response type,
not the individual responses.\footnote{A technicality here is that these responses sometimes include
prior proposals; we handle this issue by simply maintaining in aggregated responses
the prior proposal---if any---with the largest proposal number of those being aggregated.}
This aggregation strategy reduces the time to gather responses (after stabilization)
from $O(n\cack)$ to $O(D\cack)$. 

The final optimization needed to adapt PAXOS to our model is the change service.
We need the eventual leader to generate proposals {\em after} the leader election and tree services
stabilize, so it can reap the benefits of efficient acceptor response aggregation.
At the same time, however, the leader cannot generate {\em too many} new proposals
after this point, or each new proposal may delay the previous. The key property
of our change service is that it guarantees that the leader will generate $\Theta(1)$
new proposal after stabilization (assuming there is no decision yet in the network).
%



\begin{figure}

\noindent\begin{minipage}{0.48\textwidth}

\begin{algorithm}[H]
  \caption{Leader Election Service (for node $u$)}
  \label{alg:paxos:leader}
  \begin{algorithmic}[1]
  \scriptsize
  \Procedure {On Initialization}{}
    \State $\Omega_u \gets \id{u}$
    \State {UpdateQ}$(\langle leader, \id{u}\rangle)$
  \EndProcedure  
   \codespace
    \Procedure{Receive}{$\langle leader, id \rangle$}
    \If{$id > \Omega_u$}
    \State $\Omega_u \gets id$
     \State {UpdateQ}$(\langle leader, id\rangle)$
   \EndIf
   \EndProcedure
 \codespace  
 \Procedure {UpdateQ}{$\langle leader, id \rangle$}
\State {\bf empty} leader queue and {\bf enqueue} $\langle leader, id \rangle$ 
\EndProcedure
   \end{algorithmic}
\end{algorithm}
\begin{algorithm}[H]
  \caption{Change Service (for node $u$)}
  \label{alg:paxos:change}
  \begin{algorithmic}[1]
  \scriptsize
  \Procedure {On Initialization}{}
    \State $lastChange \gets -\infty$
  \EndProcedure  
   \codespace
  \Procedure {OnChange}{} \Comment{$\Omega_u$ or $dist_u$ updated.}
    \State $lastChange \gets$ time\_stamp() 
    \State UpdateQ$(\langle change, lastChange,\id{u} \rangle)$
  \EndProcedure
  \codespace
   \Procedure{Receive}{$\langle change, t, id \rangle$}
    \If{$t>lastChange$}
    \State $lastChange \gets t$
     \State {UpdateQ}$(\langle change, t, id\rangle)$
   \EndIf
   \EndProcedure
    \codespace
 \Procedure {UpdateQ}{$\langle change, t, id \rangle$}
 \State {\bf empty} the change queue then {\bf enqueue} $\langle change, t, id \rangle$
  \If{$\Omega_u = \id{u}$} \State GenerateNewPAXOSProposal()  \EndIf
      \EndProcedure  
   \end{algorithmic}
\end{algorithm}
\end{minipage}
%
\noindent\begin{minipage}{0.48\textwidth}
\begin{algorithm}[H]
  \caption{Tree Building Service (for node $u$)}
  \label{alg:tree}
  \begin{algorithmic}[1]
   \scriptsize
 \Procedure {On Initialization}{}
    \State $\forall v\neq u: dist[\id{v}] \gets \infty$ and $parent[\id{v}] \gets \bot$\
    \State $dist[\id{u}] \gets 0$ and $parent[\id{u}] \gets \id{u}$
    \State UpdateQ$(\langle search, \id{u}, 1\rangle)$
      \EndProcedure  
   \codespace
    \Procedure{Receive}{$m = \langle search, id, h \rangle$}
    \If{$h < dist[id]$}
    \State $dist[id] \gets h$
    \State $parent[id] \gets m.sender$
    \State {UpdateQ}$(\langle search, id, h+1 \rangle)$
   \EndIf
   \EndProcedure
 \codespace  
 \Procedure {UpdateQ}{$\langle search, id, h \rangle$}
 \State {\bf enqueue} $\langle search, id, h \rangle$ on tree queue
 \State {\bf discard} any message for $id$ with hop count $h' > h$ 
 \State {\bf move} message (if any) with id  $\Omega_u$ to front of tree queue
 \EndProcedure
 \codespace
 \Procedure{OnLeaderChange}{} \Comment{Called when $\Omega_u$ changes}
 \State {\bf move} message (if any) with id $\Omega_u$ to front of tree queue
\EndProcedure
  \end{algorithmic}
\end{algorithm}
\begin{algorithm}[H]
  \caption{Broadcast Service (for node $u$)}
  \label{alg:paxos:bcast}
  \begin{algorithmic}[1]
  \scriptsize
  \While{true}
  	\State {\bf wait} for at least one queue from 
	\State $\ \ \ $ $\{tree, leader, change\}$ to become non-empty
	\State {\bf dequeue} a message from each non-empty queue and
	\State $\ \ \ $ combine into one message $m$.
	\State {\bf broadcast}$(m)$ then wait for $ack$
  \EndWhile
   \end{algorithmic}
\end{algorithm}

\end{minipage}
\caption{Support services used by wPAXOS.  Notice, the broadcast service schedules message
broadcasts from the queues maintained by the other three services. We assume that when a message is received
it is deconstructed into its constituent service messages which are then 
passed to the {\tt receive} procedure
of the relevant service. This basic receive logic is omitted in the above.}
\label{fig:paxos}
\end{figure}



\subsubsection{Algorithm}

Our algorithmic strategy is to implement the logic of the classic PAXOS asynchronous agreement
algorithm~\cite{paxos,paxos-simple} in our abstract MAC layer model.
To do so, we implement and analyze a collection {\em services}
that can be connected to the high-level PAXOS logic to run it in our model.
These services are responsible for disseminating proposer messages and acceptor
responses (in an efficient manner), as well as notifying the high level PAXOS logic 
when to start over with a new proposal number. They also provide the leader election
service needed for liveness.
As mentioned, we call this combination of the high-level PAXOS logic with our  model-specific
support services, {\em wireless PAXOS}  (wPAXOS). 

We note, that if we did not care about time complexity,
our services could disseminate messages and responses with simple flooding services.
To achieve an optimal $O(D\cack)$ time, however,
requires a more complicated strategy.
In more detail, our services build, in a distributed fashion,
an eventually stabilized shortest path tree rooted at the eventual leader in the network.
Once stabilized, acceptors can efficiently send their responses to the leader
by aggregating common response types as they are routed up the tree.

We proceed below by first describing these support services, and then describing how to connect them
to the standard PAXOS logic. We conclude by proving our needed safety and liveness properties
for the resulting combined algorithm. 
In the following, we assume the reader is already familiar with the PAXOS algorithm (if not, see~\cite{paxos-simple}),
and will focus on the new service algorithms specific to our model it uses and how it uses them.


\paragraph{Services.}
Our wPAXOS algorithm requires the four support services (see Figure~\ref{fig:paxos} for the pseudocode).
The first three, {\em leader election}, {\em change}, and {\em tree building} each maintain
a message queue. The fourth, {\em broadcast}, is a straightforward loop that takes messages
from the front of these queues, combines them, then broadcasts the combined message---allowing the algorithm to multiplex multiple
service on the same channel. In the following, therefore, we talk about the messages
of the leader election, change, and tree building services as if they were using a dedicated channel.

{\em Leader Election.} This service maintains a local variable $\Omega_u$, containing an id of some
node in the network, at each node $u$.
The goal of this service is to eventually stabilize these variables to the same id network-wide.
It is implemented with standard flooding logic.

{\em Tree Building.} This service attempts to maintain in the network, for each node $v$, a shortest-path tree rooted at $v$.
The protocol runs a Bellman-Ford style iterative refinement procedure for establishing these trees,
with the important optimization that the messages corresponding to the current leader get priority in each node's tree service queue.
This optimization ensures that soon after the network stabilizes to a single leader a tree rooted at that leader is completed.
%

{\em Change.} This service is responsible for notifying PAXOS proposers when they should start a new proposal number.
The goal for this service is to ensure that the eventual leader $u_{\ell}$ starts a new proposal after the other
two services have stabilized: that is, after the leader election service has stabilized to $u_{\ell}$ across the network,
and the tree rooted at $u_{\ell}$ is complete. Proposals generated after this point, we will show, 
are efficiently processed. The service must also guarantee, however, not to generate {\em too many} changes
after this point, as each new proposal can delay termination.


\paragraph{Connecting PAXOS Logic to Support Services.}
We now describe how to combine the standard PAXOS logic 
with our model-specific support services
to form the wPAXOS algorithm.
Recall, in PAXOS there are two roles: {\em proposers} and {\em acceptors}.\footnote{There is sometimes a third role
considered, called {\em learners}, that are responsible for determining when a decision has been made. As is common,
however, we collapse this role with the role of the proposer. When a proposer determines it can can decide a value---i.e.,
because a proposal for this value was accepted by a majority of acceptors---it does so and floods this decision to the rest
of the network.} In wPAXOS all nodes play both roles. We describe below how both roles interact with our services.

{\em Proposers.}
Proposers generate {\em prepare} and {\em propose} messages (the latter sometimes called {\em accept} messages),
associated with a given proposal number.
A proposal number is a $tag$ (initially $0$) and the node's id. The pairs are compared lexicographically.
A key detail in any PAXOS deployment is the conditions under which a proposer chooses a new proposal number
and starts over the proposal process.
In wPAXOS, a proposer starts over when its change service locally calls
{\em GenerateNewPAXOSProposal}(). At this point it increases its $tag$ to be $1$ larger than the largest
tag it has previously seen or used.
If this new proposal number is rejected at either stage,
{\em and} the proposer learns of a larger proposal number in the system during this process (i.e., because an acceptor
appended a previous message to its rejection), {\em and} it is still the leader according to its leader election service,
it increases its tag number and attempts a new proposal.
If this new proposal also fails, it waits for the change service before trying again.
In other words, a proposer will only try up to $2$ proposal numbers for each time it is notified by the change service
to generate a new proposal.

To disseminate the {\em prepare} and {\em propose} messages
generated by proposers we assume a simple flooding algorithm (due to its simplicity, we do not show
this pseudocode in Figure~\ref{fig:paxos}): if you see a proposer message from $u$
for the first time, add it to your queue of proposer messages to rebroadcast.
To prevent old proposers messages from delaying new proposer messages we have each proposer maintain
the following invariant regarding the message queues used in this flooding:
At all times, the queue is updated to ensure:
 (1) only contains messages from the current leader; and (2) only contains messages associated with the largest
proposal number seen so far from that leader.

{\em Acceptors.}
We now consider the acceptors.
%
When acceptor $v$  generates a response to a {\em prepare} or {\em propose} message from proposer $u$,
$v$ labels the response with the destination $parent[\id{u}]$ then adds it to its acceptor broadcast queue.
This message is then  treated like a unicast message: even though it will be broadcast (as this is the only communication primitive
provided by our model), 
it will be ignored by any node except $parent[\id{u}]$.
Of course, if $v=u$ then the message can skip the queue and simply be passed over to the proposer logic.

To gain efficiency, we have acceptors aggregate messages in their acceptor queue when possible.
In more detail, if at any point an acceptor $v$ has in its queue multiple responses of the same type (positive or negative)
to the same proposer message, to be sent to same $parent$: it can combine them into a single {\em aggregated} message.
This single message retains the the type of responses as well as the proposal number the responses reference,
but replaces the individual responses with a count.
We can combine aggregated messages with other aggregated messages and/or non-aggregated messages in the same way
(i.e., given two aggregated messages with counts $k_1$ and $k_2$, respectively, we can combine them into an aggregated message with 
count $k_1 + k_2$).

Notice, PAXOS sometimes has acceptors respond to a {\em prepare} message with a previous proposal (i.e., combination of a proposal
number and value). When aggregating multiple messages of this type (i.e., containing previous proposal) we keep
only the previous proposal with the largest proposal number.
We also assume PAXOS implements the standard optimization that has acceptors, when rejecting a message,
append the larger proposal number to which they are currently committed.
We aggregate these previous proposals in the same way we do with positive responses to prepare messages---by maintaining
only the largest among those in the messages we are aggregating.

Finally, as with the proposers, the acceptors keep their message queue up to date by maintaining the invariant
that at all times, the queue is updated to ensure:
 (1) only contains messages in response to propositions from the current leader; and (2) only contains responses associated with the largest
proposal number seen so far from that leader.

{\em Deciding.}
When a proposer learns it can decide a value $val$ (i.e., because it has learned that
at least a majority of acceptors replied with {\em accept} to a proposal containing $val$),
it will decide $val$ then flood a {\em decide}$(val)$ message.
On receiving this message, a node will decide $val$.

\subsubsection{Analysis}
\label{sec:paxos:analysis}

We now prove that our wPAXOS algorithm solves consensus in $O(D\cack)$ time---matching the relevant lower bounds. 
%
In the following, let a {\em message step} (often abbreviated below as just ``step") be an event in an execution
where a message or ack is received.
Notice, because we consider deterministic algorithms and assume that local computation does not require any time,
the behavior of an algorithm in our model is entirely describe by a sequence of message steps. 
%
Let a {\em proposition} be a combination of a proposer, a proposer message type (i.e., either {\em prepare} or {\em propose}),
and a proposal number. 
For a fixed execution of wPAXOS:
let $a(p)$, for a given proposition $p$, be the number acceptors that generate an affirmative response to proposition $p$;
and let $c(p)$ be the total count of affirmative responses for proposition $p$ received by the originator of $p$.
Similarly, let $c(p,s)$, for step $s$, be the total count of affirmative responses to $p$ received by the end of step $s$.


\paragraph{Safety.}
In the standard asynchronous network model, PAXOS takes for granted that proposers properly count 
the relevant
responses to their propositions, as the model reliably delivers each response, labeled with the acceptor that generated it. 
In wPAXOS, however, we aggregate responses using shortest-path trees
that might change during throughout the execution.
Before we can leverage the standard safety arguments for PAXOS,
we must first show that this aggregation in dynamic trees does not compromise the integrity of the proposer response counts.

\begin{lemma}
Fix some execution of wPAXOS. Let $p$ be a proposition generated by $u$ in this execution.
It follows that $c(p) \leq a(p)$.
\label{lem:paxos:1}
\end{lemma}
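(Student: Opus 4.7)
The plan is to show that each acceptor that generates an affirmative response to $p$ contributes at most one unit to $c(p)$, so the total count received at the originator cannot exceed the number of generators $a(p)$. I will set this up as an invariant tracked over the sequence of message steps, assigning to each unit of ``live count'' a distinct acceptor identity that backs it.

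First I would record the standard PAXOS fact that each acceptor $v$ generates at most one affirmative response per proposition $p$: once $v$ replies to $u$'s \emph{prepare}/\emph{propose} message for $p$, its local state is committed, so repeated deliveries via the proposer flooding do not trigger new responses. Thus if we let $A(p,s)$ be the set of acceptors that have generated an affirmative response for $p$ by the end of step $s$, then $|A(p,s)| \leq a(p)$.

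Next I would prove by induction on the step index $s$ the following invariant: there exists an injection $\phi_s$ from the ``live'' units of count for $p$ — namely the multiset consisting of the $c(p,s)$ units already received by the originator together with the counts carried by all aggregated messages for $p$ currently sitting in some acceptor broadcast queue — into $A(p,s)$. The base case is trivial. For the inductive step, I would check each event type: response generation adds one unit of count and a fresh acceptor to $A(p,s)$, extending $\phi_{s-1}$ by one pair; intra-queue aggregation of counts $k_1$ and $k_2$ replaces their disjoint backing sets (disjoint by the induction hypothesis) with their union of size $k_1+k_2$; a broadcast whose recipient is the labeled destination moves the message and its backing set from sender's queue to recipient's queue without altering the map; a broadcast received by a non-destination is discarded by the queue invariants and leaves $\phi_s$ unchanged; queue purges triggered by a new proposal number or leader change only shrink the domain of $\phi_s$; and receipt at the originator $u$ simply reclassifies a live unit from ``in a queue'' to ``in $c(p,s)$'' without changing its backing acceptor. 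Since at every step $c(p,s) \leq |\mathrm{domain}(\phi_s)| \leq |A(p,s)| \leq a(p)$, the lemma follows by taking $s$ large enough.

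The main obstacle, and the only place where any subtlety appears, is verifying the aggregation and routing steps under the dynamic tree. The trees maintained by the tree-building service can change while a response is in transit, so a message labeled with a destination $parent[\id{u}]$ computed at generation time may arrive at a node that is no longer the sender's tree parent. I need to argue that this does not cause any unit of count to be duplicated: because every aggregated message carries a single destination label and only that labeled node folds the count into its own queue (all other neighbors discard on the destination mismatch enforced in the queue invariants), the backing sets of distinct live messages stay disjoint regardless of how the trees evolve. Stale routing can only cause counts to be \emph{lost} (when a message lands at a node whose view of the tree sends it along a path that is later purged), which weakens the invariant to the desired inequality rather than equality.
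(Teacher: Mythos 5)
Your proof is correct and is essentially the paper's own argument: both proceed by induction over message steps with a conservation invariant asserting that the total ``live'' count for $p$ (units already received by the originator plus units pending in queues or in transit) never exceeds $a(p)$, with the same case analysis (generation, relay/aggregation, receipt at the originator) and the same observation that stale routing or queue purges can only lose counts, never duplicate them. The only cosmetic difference is that you track an explicit injection from count units to backing acceptors, whereas the paper tracks just the numerical sum $Q(p,s)+c(p,s)\leq a(p)$; your version makes the ``at most one response per acceptor'' fact explicit, which the paper uses only implicitly in defining its potential.
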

\begin{proof}
In this proof, for a fixed execution, proposition $p$, node $v$, and step $s$,
we
define $q(p,v,s)$ be the sum of the following values:
(1) the total count of affirmative responses to proposition $p$ 
in node $v$'s acceptor message queue after step $s$;
(2) the total count of the affirmative responses to $p$ in the message $v$ is in the process of sending to some $w$
after step $s$, assuming that $w$ has not yet received the message at this point;
(3) $1$, if the acceptor at $v$ will generate an affirmative response to $p$ at some later step $s' > s$ in the execution.
Similarly, let $Q(p,s) = \sum_{v\in V} q(p,v,s)$. 
Fix some proposition $p$ with originator $u$. 
To prove our lemma it is sufficient
to prove the following invariant:
\begin{center}
{\em (*) For every step $s\geq 0$:  $Q(p,s) + c(p,s)  \leq a(p)$.}
\end{center}

We  prove (*) by induction on $s$.
The {\em basis} ($s=0$) of this induction is trivial:
by definition $Q(p,0) = a(p)$ and $c(p,0) = 0$. 
For the {\em inductive step} we assume (*) holds through step $s$.
There are three cases for step $s+1$ that are relevant to our hypothesis.

{\em Case $1$:} Assume $s+1$ has some acceptor $v\neq u$ receive a message that causes it to 
generate a response to $p$.
This action transfers quantity $1$ from $q(p,v,s)$ (contributed from element (3) of the definition of $q$)
and transfers it either to the messages in $v$'s queue, or, if it is not in the process of sending a message
when $s+1$ occurs, into a message being sent by $v$. In either case, this value of $1$ now
shows up in $q(p,v,s+1)$ either from element (1) {\em or} (2) of its definition/
It follows that $Q(p,s+1) = Q(p,s)$ and $c(p,s+1) = c(p,s)$.

{\em Case $2$:}
Assume $s+1$ has some acceptor $w\neq u$ receive a message $m$ from $v$
that contains affirmative responses
to $p$ {\em and} $v$ addressed the message to $w$.
In this case, the count of affirmative responses
contained in this message was moved unchanged from
$q(p,v,s)$ to $q(p,w,s+1)$. Once again $Q(p,s) = Q(p,s+1)$ and $c(p,s+1)=c(p,s)$.

{\em Case $3$:}  Assume step $s$ has $u$ receive a message 
with $k$ affirmative responses to $p$ that is addressed to $u$.
In this case, $u$ adds $k$ to its count and discards the message.
It follows that $Q(p,s+1) = Q(p,s) - k$ and $c(p,s+1) = c(p,s) + k$.
\end{proof}

The below lemma comes from~\cite{paxos-simple}, where Lamport shows
that proving this lemma provides agreement.
If we can prove this below lemma, in other words, then we can apply
the arguments of~\cite{paxos-simple} to establish agreement.
Notice, this proof leverages Lemma~\ref{lem:paxos:1}.

\begin{lemma}
Fix an execution of wPAXOS. 
Assume proposer $u$ generates a proposal with value $val$ and proposal number $x$.
It follows that there exists a set $S$ consisting of a majority of acceptors such
that either: (a) no acceptor in $S$ has accepted a proposal with  number smaller than $x$,
or (b) $val$ is the value of the highest-numbered proposal among
all proposals with numbers less than $x$ accepted by the acceptors in $S$.
\label{lem:paxos:2}
\end{lemma}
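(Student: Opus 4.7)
The plan is to adapt Lamport's original safety argument for PAXOS~\cite{paxos-simple} to wPAXOS, relying on Lemma~\ref{lem:paxos:1} to handle the extra subtlety introduced by aggregation over the tree-building service. First, I would unpack what the algorithm guarantees before $u$ generates a {\em propose} message with value $val$ and proposal number $x$. By the wPAXOS proposer logic, $u$ must have previously issued a {\em prepare} message $p$ with number $x$ and received (through aggregated acceptor responses routed up the tree rooted at $u$) an affirmative count $c(p)$ of at least a majority of acceptors; only then does it pick $val$ and broadcast the corresponding {\em propose}. Applying Lemma~\ref{lem:paxos:1}, $a(p) \geq c(p) \geq \lceil n/2 \rceil$, so I can define $S$ to be the set of acceptors that actually generated affirmative responses to $p$; this set is a majority.

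Next, I would appeal to the standard acceptor semantics preserved by wPAXOS: when an acceptor $v \in S$ emits an affirmative response to a {\em prepare} with number $x$, it simultaneously promises never to accept any future proposal with number smaller than $x$. This promise is independent of the aggregation mechanism because it is a purely local act recorded at $v$ before the response is placed on its queue. Consequently, any proposal with number strictly less than $x$ accepted by a node in $S$ must have been accepted {\em prior} to that node's affirmative response. In particular, the set of ``prior accepted proposals'' relevant to the lemma for acceptors in $S$ is exactly the set each such acceptor attaches to its affirmative response.

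The third step is to connect $u$'s choice of $val$ to the prior-proposal information contained in the responses from $S$. If none of the aggregated responses $u$ receives carries any prior proposal, then in particular no acceptor in $S$ had accepted anything before responding, which is case (a). Otherwise, by the algorithm $u$ sets $val$ to be the value of the prior proposal with the largest proposal number among all the prior proposals it sees. I would then verify that this largest observed prior proposal is, in fact, the largest prior proposal reported by {\em any} acceptor in $S$. This follows by a straightforward induction along the eventual paths in the tree: the aggregation rule explicitly keeps only the prior proposal with the largest proposal number when combining responses, and each response from a node in $S$ either reaches $u$ directly or is folded into an aggregated message whose prior-proposal slot dominates it. This gives case (b).

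The main obstacle, and where I expect to spend most of the effort, is the third step: rigorously showing that aggregation along a possibly-changing routing tree cannot lose a larger prior proposal, even though messages may be re-queued, re-routed, or dropped as $parent[\id{u}]$ changes at intermediate nodes. I would handle this by formulating a companion invariant in the spirit of (*) in the proof of Lemma~\ref{lem:paxos:1} --- tracking, for each acceptor in $S$ that has reported a prior proposal, the maximum-numbered prior proposal currently ``in flight'' or queued en route to $u$ --- and showing that at every message step this maximum either is preserved (under aggregation and forwarding) or is absorbed into $u$'s local record. Combined with the promise mechanism (which rules out any new sub-$x$ acceptances after a node joins $S$), this invariant gives exactly the case-(b) guarantee needed to complete the lemma.
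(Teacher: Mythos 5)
There is a genuine gap in your third step, and it stems from your choice of $S$. You take $S$ to be the set of \emph{all} acceptors that generate an affirmative response to the prepare proposition $p$, and you then try to argue that the highest-numbered prior proposal reported by any member of $S$ survives aggregation and is seen by $u$. But $u$ fixes $val$ and issues the propose message at the first step $s$ at which its running count $c(p,s)$ exceeds $n/2$. At that moment, responses from some acceptors in your $S$ may still be sitting in intermediate queues or in flight---including, possibly, the unique response carrying the largest prior proposal. Your proposed invariant (``the maximum in flight is either preserved or absorbed into $u$'s local record'') cannot close this hole: being \emph{preserved in flight} at step $s$ is exactly the bad case, since $u$ chooses $val$ without it. So case (b) can fail for your $S$. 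The lemma is an existential over majority sets $S$, and the paper exploits that freedom: it defines a causal graph $G(\alpha,p,s)$ whose edges record which positive-response messages were actually sent and received by step $s$, and takes $S = A_u$, the committed acceptors with a causal path to $u$ in that graph. For \emph{that} set, the max-prior-proposal argument goes through (any larger proposal that could displace $x_{max}$ along the path would have to originate outside $A_u$, yet intercepting the path would place its originator in $A_u$---a contradiction).

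This choice creates a second obligation your proof has no analogue for: showing that $A_u$ is itself a majority. Your appeal to Lemma~\ref{lem:paxos:1} only yields $a(p) \ge c(p) > n/2$ for the full responder set, which says nothing about the subset whose responses causally reach $u$ by step $s$. The paper handles this with an indistinguishability argument: if $|A_u| \le n/2$, one can construct an execution $\alpha'$, indistinguishable to $u$ through step $s$, in which no acceptor outside $A_u$ commits to $p$; in $\alpha'$ we get $c(p) > n/2 \ge a(p)$, contradicting Lemma~\ref{lem:paxos:1}. So the counting lemma is used in the opposite direction from how you use it---to rule out the possibility that $u$'s count was inflated by responses it could not have causally received---and that is the missing idea you would need to add. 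Your first two steps (the prepare-before-propose structure and the local promise semantics of acceptors) are fine and match the paper.
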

\begin{proof}
Fix some execution $\alpha$. Let $p'$ be a proposal proposition generated by $u$ for proposal number $x$.
Let $p$ be the prepare proposition from $u$ with this same number $x$ that must, by the definition of the algorithm,
 precede $p'$.
Let $A$ be the set of acceptors that commits to $p$.
Let $s$ be the step at which $c(p,s)$ becomes greater than $n/2$ for the first time.
%
We define a directed graph $G(\alpha,p,s) = (V,E')$ as follows.
Add $(w,v)$ to $E'$ iff at some step $s' \leq s$ in $\alpha$, acceptor $w$ sent a message about a positive
response to $p$ to $v$. 
Let $A_u \subset A$ be the subset of $A$ that have a path to $u$ in $G(\alpha,p,s)$.

Consider the case where at least one node in $A_u$ had previously committed to a proposal number $\hat x$ when $p$ arrived.
We know $\hat x < x$ as this node subsequently committed to $x$.
Let $x_{max} \geq \hat x$ be the largest such proposal number of a previous commitment and $val_{max}$ be the corresponding value.
Let $P$ be the set of previous proposals that $u$ receives in response to $p$.
We argue that $(x_{max}, val_{max})$ is in $P$ and has the largest proposal number of any proposal in $P$.
If this is true, it follows that $u$ will adopt $val_{max}$ for its proposal $p'$, as needed by the lemma.

To show this is true,
let $v_{max}$ be an acceptor in $A_u$ that previously committed to that proposal.
By the definition of $G(\alpha,p,s)$, there is a casual path of messages on which this 
proposal could travel to $u$.
The only way it could be discarded before arriving at $u$
is if at some point a larger proposal in response to $p$ joins
the path from $v_{max}$ to $u$, {\em before} the $x_{max}$ proposal has been sent forward. 
By assumption, however, no process in $A_u$ has a larger proposal number.
It follows that this larger proposal must have originated from some $v' \notin A_u$.
However, if this previous proposal can intercept the path from $v_{max}$ to $u$ before $v_{max}$'s messages
have passed that point, then, by the definition of $G(\alpha,p,s)$.
$u$ must have a path to $u$ in $G(\alpha,p,s)$, and therefore must be in $A_u$.

We have now shown that properties (a) and (b) hold's for $A_u$. 
To prove the lemma, we are left to show that $A_u$ must hold a majority of acceptors.
Assume for the sake of contradiction that $|A_u| \leq n/2$. 
At step $s$, however, proposer $u$ has $c(p,s) > n/2$.
By the definition of $G(\alpha,p,s)$, 
there is no causal path of messages being sent and received
from nodes in $V\setminus A_u$ to $u$ that arrive at $u$ before
it counts a majority of acceptances.
Therefore, the execution $\alpha'$ in which {\em no} node in $V \setminus A_u$
commits to $p$ is indistinguishable from $\alpha$ with respect to $u$ through step $s$.
In $\alpha'$, however, $|A| \leq n/2$ and therefore $a(p) \leq n/2$.
In this same execution, we also know $c(p) > n/2$. The result: in $\alpha'$, $c(p) > a(p)$.
This contradicts Lemma~\ref{lem:paxos:1}
\end{proof}

Part of the difficulty tackled by wPAXOS is efficiently disseminating responses form acceptors even though the
messages are of bounded size (i.e., can only hold $O(1)$ ids). If message size was instead unbounded,
we could simply flood every response we have seen so far.
Given this restriction, however, we must make sure that the proposal numbers used by proposers
in wPAXOS do not grow too large. In particular, we show below that these proposal numbers always
remain small enough to be represented by $O(\log{n})$ bits (messages must at least this
large by the assumption that they can hold a constant number of ids in a network of size $n$).

\begin{lemma}
There exists a constant $k$ such that the tags used in proposal numbers of wPAXOS
are bounded by $O(n^k)$.
\label{lem:paxos:3}
\end{lemma}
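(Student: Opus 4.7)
The plan is to control the maximum tag by bounding the total number of proposal attempts that any execution can produce.  Each proposal attempt sets the local tag to one more than the largest tag the node has previously seen or used, so the running global maximum grows by at most one per attempt; proving an $O(n^k)$ bound on the number of attempts will therefore prove the lemma.  A proposal attempt occurs only inside \textsc{GenerateNewPAXOSProposal}, which is called only from the change service's \textsc{UpdateQ}, and even then only when the node is its own leader; moreover, each such call produces at most two attempts (the initial proposal and, by the description of the proposer logic, up to one immediate retry after a rejection).  So it suffices to bound the total number of times the change service's \textsc{UpdateQ} runs across the network.

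The change service's \textsc{UpdateQ} runs at $u$ for exactly two reasons: (a) \textsc{OnChange} fires because $\Omega_u$ or some $dist_u[id]$ was just updated; or (b) a $\langle change, t, \cdot \rangle$ message arrives with $t > lastChange_u$.  Because $lastChange$ at each node is monotone and the set of timestamps in circulation is precisely the set of (a)-event times across the whole network, the total number of (b)-events is at most $n$ times the total number of (a)-events.  The variable $\Omega_u$ strictly increases in the id order and takes at most $n$ values, yielding $O(n^2)$ updates in total.  For the tree service, each $dist_u[id]$ is strictly decreasing once it leaves $\infty$, so if I can show that no $\langle search, id, h \rangle$ message ever carries $h > n$, then each $(u, id)$ pair accounts for at most $n$ updates, for a total of $O(n^3)$.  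Assembling these bounds gives $O(n^3)$ (a)-events, $O(n^4)$ (b)-events, hence $O(n^4)$ \textsc{UpdateQ} calls and $O(n^4)$ proposal attempts, so the lemma will hold with $k = 4$.

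The main obstacle is justifying the $h \le n$ bound on hop counts, and this is the step I expect to require the most care.  My plan is to trace any $\langle search, id, h \rangle$ message backwards through the forwarders that produced it, obtaining a chain $v_1 = id, v_2, \ldots, v_h$ in which $v_i$ processed an incoming hop count $i - 1$ (thereby setting $dist_{v_i}[id] = i - 1$) and then forwarded $i$.  These forwarders must be distinct: if $v_i = v_j$ with $i < j$, then by the time $v_j$ receives hop count $j - 1$ we already have $dist_{v_j}[id] \le i - 1 < j - 1$, so the ``$h < dist[id]$'' guard rejects the message, a contradiction.  The natural chronological ordering of receive events along the chain (each forward occurs after the corresponding receive) makes this comparison valid.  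Distinctness among $n$ nodes then forces $h \le n$, closing the loop and establishing the lemma.
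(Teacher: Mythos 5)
Your proof is correct and follows essentially the same route as the paper's: bound the number of change events via the monotonicity of $\Omega_u$ (at most $n$ increasing values) and of each $dist[\cdot]$ entry (at most $n$ decreasing values), multiply by the at-most-two proposal attempts per \textsc{GenerateNewPAXOSProposal} call, and observe that each attempt raises the global maximum tag by at most one. The one place you go beyond the paper is in actually justifying the $h \le n$ bound on hop counts via the distinct-forwarders chain argument (and in separately counting the \textsc{UpdateQ} calls triggered by received change messages); the paper simply asserts that $dist[v]$ ``can only have $n$ values,'' so your added care closes a detail it leaves implicit.
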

\begin{proof}
Each node $u$ can only locally observe $O(n^2)$ change events.
its leader variable can only have $n$ possible values and these values
strictly increase. And for each $v$,  $dist[v]$ at $u$ can only have $n$ values it strictly decreases.
Each change event can lead to at most $2$ new proposal numbers at each node in the network
(recall: the algorithm restricts a proposer to attempting at most $2$ new numbers in response to a given
call of {\em GenerateNewPAXOSProposal}). We also note that there are $n$ total nodes and the tags
in proposal numbers are increased to be only $1$ larger than the largest tag previously seen.
A straightforward induction argument bounds the largest possible tag size as polynomial in $n$ as needed.
\end{proof}

\paragraph{Liveness.}
 We now establish that all nodes in wPAXOS will decide by time $O(D\cack)$. The main idea
 in this proof is to consider the behavior of the algorithm after the tree and leader election service
 stabilize. WE show this stabilization occurs in $O(D\cack)$ time and after this point the leader
 will continue to generate proposals and end up reaching a decision within an additional
 $O(D\cack)$ time.
 
 \begin{lemma}
 Fix some execution of wPAXOS. 
 Every node decides in $O(D\cack)$ time.
 \label{lem:paxos:4}
 \end{lemma}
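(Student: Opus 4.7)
The plan is to decompose the overall decision latency into a constant number of stabilization and propagation phases, each of which I bound by $O(D\cack)$. Let $u_\ell$ be the node with the maximum id in the network; this is the unique eventual leader. First I would show that the leader election variables $\Omega_v$ stabilize to $\id{u_\ell}$ at every node $v$ within $O(D\cack)$ time. This is a flooding argument: once a node learns $\id{u_\ell}$ it enqueues a $\langle leader, \id{u_\ell}\rangle$ message, and since the broadcast service multiplexes only three queues and each stable queue has $O(1)$ pending messages, each hop costs only a constant number of $\ack$ delays. Next I would argue that the shortest-path tree rooted at $u_\ell$ completes within an additional $O(D\cack)$ time. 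The key is the priority optimization in \textsc{UpdateQ} and \textsc{OnLeaderChange}: as soon as $\Omega_v$ is set to $\id{u_\ell}$, the corresponding search message is promoted to the head of the tree queue. Once leader stabilization has propagated, the Bellman-Ford refinement for $u_\ell$ then proceeds level by level, one hop per $O(\ack)$, converging to correct $dist[\id{u_\ell}]$ and $parent[\id{u_\ell}]$ values everywhere in $O(D\cack)$ further time.

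Second, I would analyze the change service. Each node locally observes only polynomially many change events in total (its leader value strictly increases, each $dist$ entry strictly decreases), and after the leader and tree services stabilize no new local change events are generated. The last $\langle change,\cdot\rangle$ message in each node's queue floods to $u_\ell$ in $O(D\cack)$ time and triggers it to invoke \textsc{GenerateNewPAXOSProposal}. Because the algorithm explicitly caps the number of retries per notification at two, $u_\ell$ issues only $O(1)$ proposals after both the leader and the tree have stabilized.

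Third, I would bound a single PAXOS round run on the stabilized infrastructure. The prepare message from $u_\ell$ floods to every acceptor in $O(D\cack)$ time. Each acceptor then generates a positive response labelled with its $parent[\id{u_\ell}]$ and routes it up the stable tree. The aggregation rule lets every internal tree node fuse all same-type, same-proposal-number responses into a single aggregated message carrying only a count (and, for prepare replies, the largest prior proposal), so at every level only $O(1)$ messages are in flight; the entire convergecast therefore reaches $u_\ell$ in $O(D\cack)$ time. Because there are no faults, every acceptor responds and $u_\ell$ tallies $n$ affirmations, which is a majority. The subsequent propose/accept round and the final flood of the decide message each take $O(D\cack)$ by the same argument. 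Summing the phases gives the claimed $O(D\cack)$ bound for every node to decide.

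The main obstacle will be the interaction between the queue-maintenance invariants (which discard messages referring to an old leader or a smaller proposal number) and aggregated responses still in transit for the ultimately successful proposal: one must rule out the pathological case in which a partially-aggregated response is dropped at an interior tree node in favor of a newer message that has not yet absorbed the earlier contributions. After leader and tree stabilization, however, $u_\ell$ is the sole proposer and issues at most $O(1)$ strictly-increasing proposal numbers, and only the responses keyed to the most recent of these need ever reach $u_\ell$ to produce a decision; so the queue invariants quiesce within a constant number of $\ack$ delays and the final convergecast proceeds undisturbed, preserving the $O(D\cack)$ accounting.
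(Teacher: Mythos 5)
Your proof follows essentially the same route as the paper's: stabilize the leader election and tree services in $O(D\cack)$ (the paper's ``global stabilization time''), argue that the change service triggers only $O(1)$ further proposals from the eventual leader after that point, and bound each post-stabilization round by a flood of the proposer message followed by an aggregated convergecast up the stabilized tree, finishing with the $O(D\cack)$ flood of the decide message. The only place you are slightly more optimistic than the paper is the claim that the first post-stabilization prepare gathers $n$ affirmations---an acceptor may already be committed to a higher proposal number and reject---but the paper resolves exactly this case by having the leader learn the largest committed number from the rejections and retry once with a larger one, which your $O(1)$-retries accounting already accommodates.
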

 \begin{proof}
Let $\hat s$ be the step that generates the {\em final} change in this execution.
Let $t(\hat s)$ be the global time at which this step occurs.
Repurposing a term from the study of partially synchronous model,
we call $t(\hat s)$ the {\em global stabilization time} (GST) as it defines a time by which point:
(a) the whole network has stabilized to the same leader, $\ell$;
(b) the tree defined by $parent[\ell]$ pointers in the network has stabilized
to a shortest-path spanning tree rooted at $\ell$.

Consider $\ell$'s behavior after the GST.
We know that $\hat s$ generates a change message.
Because this change message has a timestamp as larger (or larger) than other change message
in the execution, 
every node will receive a change message with this timestamp for the first time
somewhere in the 
$[t(\hat s), t(\hat s) + O(D\cack)]$. This is the last change message any node will pass on to their {\em UpdateQ} procedure
in the change service.
This is significant, because it tells us that $\ell$ will generate a new proposition at some on or after the GST, but not {\em too much} after.
And, this is the last time the change service will cause $\ell$ to generate a proposition.

We now bound the efficiency of $\ell$'s propositions after the GST.
Notice, after GST only messages from $\ell$ can be stored in proposer queues.
When $\ell$ generates a new proposition message with a larger proposal number than any previous such messages,
it will necessarily propagate quickly---reaching a major it of acceptors in $O(D\cack)$ time.
We can show that acceptor responses return to $\ell$ with the same bound,
as every such response is at most distance $D$ from $\ell$ in the shortest path tree,
and can be delayed by at most $O(\ack)$ time at each hop (if multiple responses arrive
at the same node they will simply be aggregated into one response)

Now we consider the fate of $\ell$'s propositions after GST.
We know from above that it generates a new proposition with a new proposal number on or after GST.
If receives enough commits to its prepare message for this proposal number,
then it follows it will go on to gather enough accepts to decide (as no other proposer,
after GST, can have its proposals considered).
If it fails to gather enough commits, it will move on to the next proposal after counting a majority
of the acceptors rejecting its prepare. By the algorithm, however, it will learn the largest proposal
number that one of this set has previously committed to. Its subsequent proposal number
will be larger, and therefore this same majority will end up sending him enough commits to
move on toward decision.
It follows that a constant number of propositions is sufficient for $\ell$ to decide.

To tie up the final loose ends, we note that once $\ell$ decides, all nodes decide within an additional
$O(D\cack)$ time, and that GST is bounded by $O(D\ack)$ as well: 
it takes this long for the leader election service to stabilize, after which the tree rooted at the leader
must be stabilized within in addition $O(D\cack)$ time.
 \end{proof}

 \paragraph{Pulling Together the Pieces.}
 We are now ready to combine our key lemmas to prove the final correctness of wPAXOS.
 
 \begin{theorem}
 The wPAXOS algorithm solves consensus in $O(D\cack)$ time in our abstract MAC layer
 model in any connected network topology, where $D$ is the network diameter and nodes have unique
 ids and knowledge of network size.
 \label{thm:paxos}
 \end{theorem}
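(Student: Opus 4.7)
The plan is to combine the four lemmas just established into the three standard consensus properties plus the time bound. Agreement is the substantive part: I would invoke Lemma~\ref{lem:paxos:2} together with Lemma~\ref{lem:paxos:1} to recover exactly the invariant that Lamport uses in~\cite{paxos-simple} to prove PAXOS agreement. Concretely, once we know that for every proposal number $x$ adopted by some proposer $u$ there is a majority set $S$ of acceptors such that either no one in $S$ has accepted a smaller-numbered proposal, or $u$'s value is the one attached to the highest-numbered such accepted proposal, the standard inductive argument on proposal numbers shows that all chosen values coincide. Lemma~\ref{lem:paxos:1} is what makes this argument go through despite the in-tree aggregation, since without it a proposer might falsely believe it had a majority of commits.

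Validity I would handle in one line: by inspection of the proposer logic, a proposer's value is either its own initial value or a value inherited from a previously accepted proposal; by induction back to the first accepted proposal, every value in circulation originated as some node's initial input.

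Termination and the time bound come directly from Lemma~\ref{lem:paxos:4}, which states that every node decides within $O(D \cack)$ time; this simultaneously gives the termination property of consensus and the claimed complexity. The only remaining loose end is verifying that the algorithm is actually implementable under our $O(1)$-id message-size assumption, which is exactly what Lemma~\ref{lem:paxos:3} supplies: proposal tags stay polynomial in $n$, hence fit in $O(\log n)$ bits, which the model permits.

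The main obstacle is not any new calculation but ensuring that the citation of the classical PAXOS safety proof is legitimate in our setting. I would explicitly point out that Lemma~\ref{lem:paxos:2} is stated in precisely the form Lamport requires, and that Lemma~\ref{lem:paxos:1} closes the only gap introduced by the wireless model, namely that an aggregated count $c(p)$ received by a proposer along a dynamically changing shortest-path tree never exceeds the true number $a(p)$ of affirmative responders. With these two facts in hand, the safety proof of~\cite{paxos-simple} applies verbatim, and the theorem follows by assembling agreement, validity, termination, and the $O(D\cack)$ bound.
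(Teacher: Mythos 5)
Your proposal is correct and follows essentially the same route as the paper: validity by inspection, termination and the $O(D\cack)$ bound from Lemma~\ref{lem:paxos:4}, and agreement by combining Lemma~\ref{lem:paxos:2} (underpinned by Lemma~\ref{lem:paxos:1}) with the standard safety argument of Lamport. Your extra remark that Lemma~\ref{lem:paxos:3} is what keeps proposal numbers representable within the bounded message size is a sensible addition, but it does not change the structure of the argument.
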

 \begin{proof}
 The {\em validity} property is trivial. The {\em termination} property as well as the $O(D\cack)$ bound on termination
 follow directly from Lemma~\ref{lem:paxos:4}.
 To satisfy agreement we combine Lemma~\ref{lem:paxos:2} with the standard argument of~\cite{paxos-simple}.
 \end{proof}

 \section{Conclusion}
Consensus is a key primitive in designing reliable distributed systems.
Motivated by this reality and the increasing interest in wireless distributed systems,
in this paper we studied the consensus problem in a wireless setting.
In particular, we proved new upper and lower bounds for consensus
in an abstract MAC layer model---decreasing the gap between our
results and real deployment. We prove that (deterministic) 
consensus is impossible with crash failures,
and that without crash failures, it requires unique ids and knowledge of the
network size. We also establish a lower bound on the time complexity of any consensus solution.
We then present two new consensus algorithms---one optimized for single hop networks and one
that works in general multihop (connected) networks.

In terms of future work, there are three clear next steps that would help advance
this research direction. The first is to consider consensus in an abstract MAC layer model
that includes unreliable links in addition to reliable links.
The second is to consider what additional formalisms might allow deterministic consensus
solutions to circumvent the impossibility concerning crash failures. 
In the classical distributed systems setting, failure detectors were used for this purpose.
In the wireless world, where, among other things, we do not always assume {\em a priori} knowledge
of the participants, this might not be the most natural formalism to deploy.
The third direction is to consider randomized algorithms, which might provide better performance
and the possibility of circumventing our crash failure, unique id, and/or network size knowledge lower bounds.

\bibliographystyle{plain}
\bibliography{wireless-consensus,wireless,sinr}


\end{document}